\newcommand{\rot}[1]{\begin{turn}{90}#1\enspace\end{turn}}
\newtheorem{theorem}{Theorem}
\begin{document}

\title{Adaptive Probabilistic Planning for the Uncertain and Dynamic Orienteering Problem}

\author{Qiuchen Qian, Yanran Wang, and David Boyle,~\IEEEmembership{Member,~IEEE}
\thanks{The authors are with the Dyson School of Design Engineering, Imperial College London, U.K. (e-mail: qiuchen.qian19@imperial.ac.uk; yanran.wang20@imperial.ac.uk; david.boyle@imperial.ac.uk).}}

\markboth{}%
{Qian \MakeLowercase{\textit{et al.}}: Adaptive Probabilistic Planning for UDOP}


\maketitle

\begin{abstract}
The Orienteering Problem (OP) is a well-studied routing problem that has been extended to incorporate uncertainties, reflecting stochastic or dynamic travel costs, prize-collection costs, and prizes. Existing approaches may, however, be inefficient in real-world applications due to insufficient modeling knowledge and initially unknowable parameters in online scenarios. Thus, we propose the Uncertain and Dynamic Orienteering Problem (UDOP), modeling travel costs as distributions with unknown and time-variant parameters. UDOP also associates uncertain travel costs with dynamic prizes and prize-collection costs for its objective and budget constraints. To address UDOP, we develop an \textbf{AD}aptive \textbf{A}pproach for \textbf{P}robabilistic pa\textbf{T}hs, ADAPT, iteratively performing `execution' and `online planning' based on an initial `offline' solution. The execution phase updates the system status and records online cost observations. The online planner employs a Bayesian approach to adaptively estimate power consumption and optimize path sequence based on safety beliefs. We evaluate ADAPT in a practical Unmanned Aerial Vehicle (UAV) charging scheduling problem for Wireless Rechargeable Sensor Networks. The UAV must optimize its path to recharge sensor nodes efficiently while managing its energy under uncertain conditions. ADAPT maintains comparable solution quality and computation time while offering superior robustness. Extensive simulations show that ADAPT achieves a $100\%$ Mission Success Rate (MSR) across all tested scenarios, outperforming comparable heuristic-based and frequentist approaches that fail up to $70\%$ (under challenging conditions) and averaging $67\%$ MSR, respectively. This work advances the field of OP with uncertainties, offering a reliable and efficient approach for real-world applications in uncertain and dynamic environments.
\end{abstract}

\begin{IEEEkeywords}
Orienteering Problem with uncertainties, UAV, Charging Scheduling Problems, Bayesian Inference
\end{IEEEkeywords}

\section{Introduction} \label{sec:introduction}
\IEEEPARstart{O}{}rienteering Problem (OP) is influential in many real-world applications due to its flexibility and resource constraints \cite{gunawan2016orienteering}. OP aims to determine the most efficient path that initiates from a start depot and returns to an end depot, maximizing the collected prize without violating the budget constraint. Recent research effectively extends the classic OP by introducing dynamic and stochastic attributes \cite{wang2019research, angelelli2021dpop, thayer2021adaptive}. However, a notable research gap remains concerning real-world travel costs and their potential impacts on collectible prizes and prize-collection costs. Thus, we propose a novel model, the Uncertain and Dynamic Orienteering Problem (UDOP), to further approximate real-world scenarios. Unlike existing models that often assume known probability distributions for uncertain elements, UDOP considers edge costs following distributions with unknown time-variant parameters. UDOP also considers the interrelation between edge costs, collectible prizes, and prize-collection costs. For instance, in emerging Internet of Things (IoT) contexts like urban last-mile delivery \cite{rios2021recentVRP}, travel time may vary stochastically and change systematically due to traffic congestion at different times of the day. The delayed parcel may reduce customer satisfaction (prize). The interrelation becomes interesting when edge and prize-collection costs share the same unit. Another example is the Charging Scheduling Problem (CSP) for Unmanned Aerial Vehicles (UAVs) servicing a Wireless Rechargeable Sensor Network (WRSN) \cite{li2022uavnetwork, qian2021optimal}. The energy expended during travel directly impacts the UAV's capability for recharging sensor nodes, both constrained by the residual energy budget.

In this work, we study the characteristics of UDOP in the context of the CSP for UAV-assisted WRSNs. Here, the uncertainty arises from variable factors that can cause continuous fluctuations in energy costs during UAV flights. These real-world error sources are generally unpredictable (e.g., moving obstacles, wind gusts, and turbulence). Comprehensively accounting for all unforeseen factors in mission planning is difficult due to the coupling between global mission planning and local trajectory planning (as in \cite{wang2024constrained}). This complexity may lead to suboptimal paths and uncontrollable computation time. Furthermore, effectively incorporating these factors into energy cost estimation requires additional sensor hardware support (e.g., anemometers \cite{abichandani2020wind}), specialized modeling knowledge (e.g., aerodynamics \cite{rodrigues2022drone} and battery behavior~\cite{chen2018case}). While strategies for efficiently addressing practical CSPs have been widely discussed in the literature, such as the two-stage strategy \cite{shi2024twostage}, three-dimensional charging schedule \cite{lin2023near}, and joint trajectory and scheduling optimization \cite{liu2022joint}, few researchers address the balance between solution efficiency and mission safety. In our CSP context, a safety guarantee represents that following a charging plan, the UAV can return to its end depot with sufficient energy, avoiding system failure or emergency actions like forced landing during the mission.

Thus, accurately estimating UAV energy cost is essential to guarantee mission efficiency and safety. In contrast to most literature that employs deterministic and static power consumption models for UAV scheduling \cite{pasha2022review}, we introduce adaptive probabilistic planning to continuously calibrate energy cost estimation and assess the performance of generated solutions. Our key contributions can be summarized as follows:
\begin{itemize}[leftmargin=*]
    \item We propose a new Uncertain and Dynamic Orienteering Problem (UDOP) to address real-world uncertainties. UDOP aims to identify an optimal path that maximizes the collected prize without violating budget constraints. In UDOP, edge costs follow distributions with dynamic and unknown parameters. Variations of edge costs can affect collectible prizes and prize-collection costs, bringing additional challenges to objective optimization and budget constraints. We formulate a practical UDOP with a detailed Charging Scheduling Problem (CSP) that employs a UAV to recharge sensor nodes in uncertain and dynamic environments.
    \item We propose an \textbf{AD}aptive \textbf{A}pproach for \textbf{P}robabilistic pa\textbf{T}hs, ADAPT, to address UDOP in the CSP context. ADAPT reduces the need for extra sensors and modeling knowledge requirements, enabling robust and efficient online adjustments to the UAV's path during the mission. ADAPT comprises three phases: offline planning for initial path generation, execution phase for updating the WRSN's and the UAV's status while observing actual power consumption, and online planning for updating travel costs and re-planning.
    \item The online planner incorporates a Bayesian approach to estimate the UAV's average power consumption during flight. We provide a detailed analysis of ADAPT and the Bayesian approach in Section \ref{sec:bayesian-performance}. Our empirical findings show that ADAPT can achieve a $100\%$ mission success rate with comparable solution quality and computation time across all tested scenarios, while alternative approaches have unstable performance under challenging conditions.
\end{itemize}

\section{Related Work} \label{sec:literature}
We provide an overview of research on OPs with uncertain features, focusing on approaches to solving these problems. We then consider practical strategies to address the CSP.

The OPs with uncertain attributes have recently gained attention due to their ability to model real-world uncertainties in routing problems. Gunawan et al. comprehensively review stochastic and dynamic OP variants and their associated solution approaches \cite{gunawan2016orienteering}. An interesting variant is the OP with Stochastic Travel and Service times (OPSTS) that considers uncertainty in edge travel and node service costs \cite{campbell2011orienteering}. The authors employ dynamic programming to precisely solve three special cases of the OPSTS (e.g., identical distributions for travel and service time). Angelelli et al. introduced the Dynamic and Probabilistic OP, incorporating visitation probabilities and time window constraints \cite{angelelli2021dpop}. They develop various heuristics, including static approximation, greedy methods, and Sample Average Approximation with Monte Carlo sampling. The Dynamic Stochastic OP (DSOP) assumes the travel time distributions to be discrete distributions related to the agent's arrival time \cite{lau2012DSOP}. A branch-and-bound algorithm with local search operators is applied to solve the DSOP. However, a common limitation of existing models is their reliance on a priori known probability distributions for uncertain elements. This assumption proves problematic for our UDOP as distribution parameters are unknown and evolve dynamically. Consequently, the applicability of current methods to the UDOP is constrained, necessitating new approaches to address more complex and realistic scenarios.

Considering the CSP, the primary focus is typically UAV energy management. Existing research mainly involves prolonging UAV endurance through mobile utility vehicles \cite{liu2024dynamic}, charging UAVs \cite{xue2023cuav}, and static charging stations \cite{yang2022disaster}. However, optimizing mission efficiency and safety from a planning perspective remains underexplored. For instance, Wang et al. present a framework that considers vehicle movement costs and capacity constraints \cite{wang2015mobile}. While they address the vehicle energy dynamics, their assumption of constant depletion rates may not capture real-world variations. Suman et al. propose a radio frequency energy transfer scheme considering stochastic charging efficiency due to path loss and RF-to-DC conversion \cite{suman2019uav}. However, UAV energy constraints during motion and charging operations are not adequately involved, which may impact mission safety. Evers et al. address robust UAV mission planning using uncertainty sets for edge costs and node prizes \cite{evers2014robust}. Their Robust OP (ROP) can then be optimally solved using CPLEX by selecting appropriate uncertainty intervals. ROP underscores the importance of accurately modeling the service agent's cost paradigm in real-world applications. However, precise energy cost estimation remains challenging in UAV mission planning. Recent research has focused on incorporating wind dynamics \cite{zhang2021energy, qian2022practical} and developing data-driven models \cite{rodrigues2022drone, she2020uav-power}. A notable example is the studies on the DJI M100 drone \cite{dji2016m100}. Alyassi et al. proposed a linear regression model (Reg-Model-A) requiring wind dynamics, UAV ground speed, and acceleration data \cite{alyassi2022autonomous}. In contrast, the linear regression model by Rodrigues et al. (Reg-Model-R) estimates average power consumption using the UAV's hover power \cite{rodrigues2022drone}. Our comparison of these models using realistic M100 flight data \cite{rodrigues2021flight} (see Fig.~\ref{fig:power-log}) shows that Reg-Model-R exhibits lower error while Reg-Model-A tracks the power variation better. Reg-Model-R is more practical in scenarios where accurate wind speed and acceleration predictions are infeasible, but it may underperform with insufficient empirical data or when applied to smaller, wind-sensitive UAVs. We extend this model by incorporating a Bayesian approach to adaptively update original distributions (see Sections \ref{sec:system-prior} and \ref{sec:system-online} for more details). 

\begin{figure}[t]
    \hspace*{-0.3cm}
    \centering
    \includegraphics[width=0.5\textwidth]{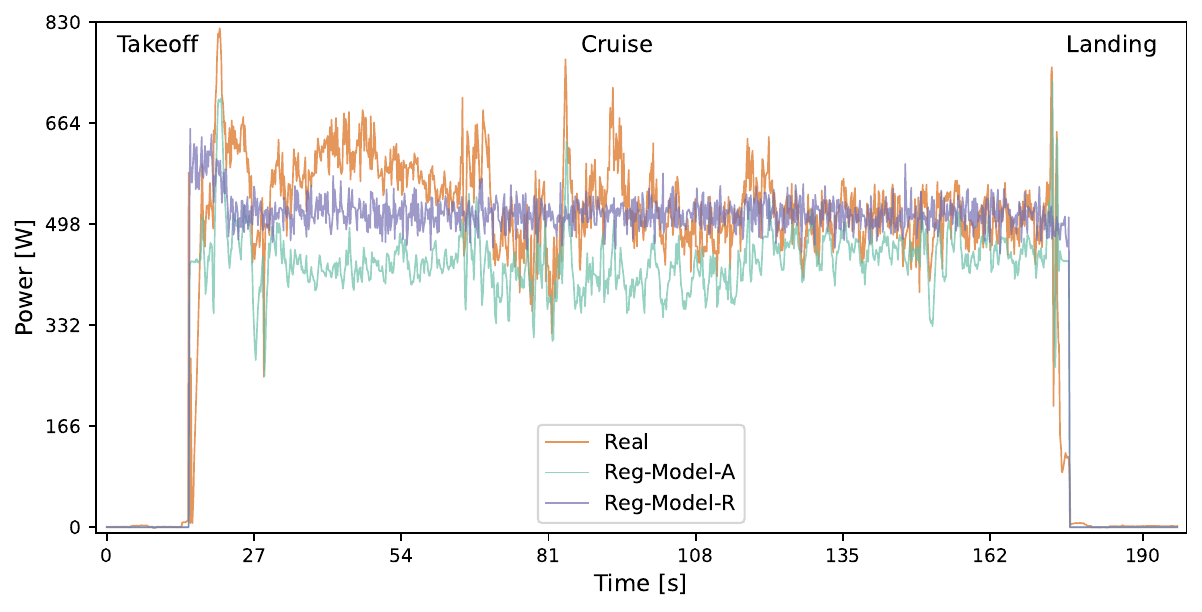}
    \caption{Linear regression models, i.e., Reg-Model-R \cite{rodrigues2022drone} and Reg-Model-A \cite{alyassi2022autonomous}, for estimating real-time power of DJI M100 \cite{dji2016m100}.}
    \label{fig:power-log}
\end{figure}

\section{Problem Formulation} \label{sec:problem}
\subsection{UDOP formulation}
Let $\mathbf{G} = \{ \mathbf{V}, \mathbf{E} \}$ be a complete graph with a set of $N$ target nodes $\mathbf{V} = \{ v_1, ..., v_n\}$ and corresponding edge set $\mathbf{E} = \{ e_{ij}, ... \}$. Each node $v_i$ is characterized by its 3D coordinate $(x_i, y_i, z_i)$ and a time-dependent prize $\mathcal{P}_{f_1}(v_i \:|\: t), \: t \in \mathbb{R}^{\geq 0}$. UDOP introduces two distinct cost functions, $\mathcal{C}_{f_2}\big(v_i, v_j \:|\: t\big)$ for the travel cost between nodes $v_i$ and $v_j$ when departing from $v_i$ at time $t$, and $\mathcal{C}_{f_3}\big(v_i \:|\: t\big)$ for the prize-collection cost at node $v_i$ when collection begins at time $t$. Here, $f_1$, $f_2$, and $f_3$ are `nominal' functions that can take any continuous form. We denote the start and the end depots by $v_0$ and $v_{N+1}$, with $\mathcal{P}(v_0) = \mathcal{P}(v_{N+1}) = 0$ and $\mathcal{C}_{f_3}\big(v_0 \:|\: t\big) = \mathcal{C}_{f_3}\big(v_{N+1} \:|\: t\big) = 0$ respectively. A feasible UDOP path begins at $v_0$, collects as much prize as possible, and ends at $v_{N+1}$, subject to a given budget constraint $\mathcal{B}$. Using binary variable $X_{ij} \in \{ 0, 1 \}$ to determine node visitation and continuous time variable $t$, we formulate UDOP as follows:
\begin{subequations}
\begin{flalign}
    \label{con:sdop-obj}
    & \textbf{(UDOP)} \quad \max\sum_{i=0}^N \sum_{j=1}^{N} \mathcal{P}_{f_1}(v_j \:|\: t_k) \: X_{ij},\quad t_k\in\mathbb{R}^{\geq0}&\\
    \label{con:sdop-start-end}
    & \textbf{s.t.} \quad \sum_{j=1}^{N+1} X_{0\: j} = \sum_{i=0}^N X_{i\: N+1} = 1&\\
    \label{con:sdop-one-visit}
    & \sum_{i=1}^{N} X_{ik} = \sum_{j=1}^{N} X_{kj} \leq 1, \quad k = 2, ..., N&\\
    \label{con:sdop-subtour}
    & \sum_{v_i \in\: \mathbf{S}}\: \sum_{v_j \in\: \mathbf{S}} X_{ij} \leq |\mathbf{S}| - 1, \quad \forall\: \mathbf{S} \subset \mathbf{V},\; |\mathbf{S}| \geq 3&\\
    \label{con:sdop-budget}
    &\begin{aligned}
        \sum_{i=0}^{N} \sum_{j=1}^{N+1} & \mathcal{C}_{f_2}\big( v_i, v_j \:|\: t_k \big) \cdot X_{ij} &\\
        + \sum_{i=0}^{N} & \sum_{j=1}^{N} \mathcal{C}_{f_3}\big( v_j \:|\: t_l \big) \cdot X_{ij} \:\leq\:\mathcal{B}, \quad 0 \leq t_k < t_l &
    \end{aligned}&
\end{flalign}
\end{subequations}
The objective function \eqref{con:sdop-obj} maximizes the collected prizes. Constraint \eqref{con:sdop-start-end} ensures the path starts at $v_0$ and ends at $v_{N+1}$. Constraint \eqref{con:sdop-one-visit} maintains path connectivity and restricts each target node to at most one visit. Constraint \eqref{con:sdop-subtour} prevents subtours, ensuring a single continuous path. Constraint \eqref{con:sdop-budget} stipulates the total path cost, comprising prize collection and travel costs, must not exceed the given budget $\mathcal{B}$.

\subsection{CSP formulation}
The CSP assigns a single UAV to recharge the maximum energy while ensuring the UAV's safe return to the end depot under uncertain environments. To tackle the UDOP within the CSP scenario, understanding the analytical form of $f_1, f_2$ and $f_3$ is essential. In principle, all three functions can be stochastic and initially unknowable, but we make several assumptions as below to reasonably reduce the problem's complexity based on established research. The recharging process for sensor nodes involves a DC-DC converter, inverter, inductive link, rectifier, and constant current (CC) charger (as illustrated in \cite{arteaga2023high}, Figure 17). Drawing from their experimental results, we simplify their Inductive Power Transfer (IPT) process using fixed efficiencies: $\eta_{\text{IPT}}$ for the IPT link\footnote[1]{$\eta_{\text{IPT}}$ can be statistically characterized by IPT environment distributions \cite{arteaga2018probability}, although such modeling is not required to demonstrate ADAPT's utility.} and $\eta_{\text{CC}}$ for the CC charger. The charger operates within a 20-42 V range at CC, charging a $\mathrm{C}=10$ F supercapacitor bank with an average current of $\bar{\mathrm{I}}_{\text{CC}} = 0.825$ A, producing a $[0, 6.82]$ kJ node prize range. For instance, with a CC charger voltage at 30 V, the UAV's energy consumption and charging time are given by: $\mathrm{E}_{\text{IPT}} = 0.5 \: \mathrm{C} \: (\mathrm{V}_{\max}^2 - 30^2) / \eta_{\text{IPT}}$ and $t_{\text{IPT}} = 10 \cdot (\mathrm{V}_{\max} - 30) / \bar{\mathrm{I}}_{\text{CC}} / \eta_{\text{CC}}$. We assume all sensor nodes are identical to those described in \cite{polonelli2020flexible} and operate at a constant sampling frequency. This leads to a uniform energy depletion rate, $R_{\text{SN}}$, which linearly increases energy and time requirements for recharging sensor nodes. Based on these assumptions, we define $f_1$, $f_2$ and $f_3$ as follows:
\begin{subequations}
\begin{flalign}
    \label{con:csp-prize}
    & \mathcal{P}_{f_1} (v_i \:|\: t) = \frac{1}{2}\: \mathrm{C}\:\big(\mathrm{V}_{\max}^2 - \mathrm{V}(v_i \:|\: t = 0) ^2 \big) + R_{\text{SN}}\: t&\\
    \label{con:csp-travel-cost}
    & \mathcal{C}_{f_2} (v_i, v_j \:|\: t) = \frac{\mathrm{\bar{P}}^*_{\text{tk}}\: (H - z_i)}{\mathrm{v}_{\text{tk}}} + \frac{\mathrm{\bar{P}}^*_{\text{cr}}\: d(v_i, v_j)}{\mathrm{v}_{\text{cr}}} + \frac{\mathrm{\bar{P}}^*_{\text{ld}}\: (H - z_j)}{\mathrm{v}_{\text{ld}}}&\\
    \label{con:csp-charge-cost}
    & \mathcal{C}_{f_3} (v_i \:|\: t) = \frac{\mathcal{P}_{f_1} (v_i \:|\: t)}{\eta_{\text{IPT}}}&
\end{flalign}
\end{subequations}
Equation \eqref{con:csp-prize} defines a sensor node's chargeable energy as the difference between its maximum energy capacity and current energy level. Equation \eqref{con:csp-travel-cost} stipulates the UAV's energy consumption during travel, accounting for three distinct flight regimes: takeoff, cruise, and landing. $\mathrm{\bar{P}}^*$ denotes the actual average power consumption, following a normal distribution with unknown mean and standard deviation (SD), e.g., $\mathrm{\bar{P}}^*_{\text{tk}}(t_1, t_2) \sim \boldsymbol{\mathsf{N}}(\mu_{\text{tk}}, \sigma_{\text{tk}} \:|\: t_1, t_2)$. $H$ refers to the cruise altitude, $d(v_i, v_j)$ is the Euclidean distance between two nodes, and $\mathrm{v}$ denotes the average speed in each regime. Equation \eqref{con:csp-charge-cost} quantifies the energy needed to recharge a sensor node as the ratio of chargeable energy to the IPT link efficiency.

\section{System Design} \label{sec:system}
\begin{figure}[t]
    \hspace*{-0.15cm}
    \centering
    \includegraphics[width=0.5\textwidth]{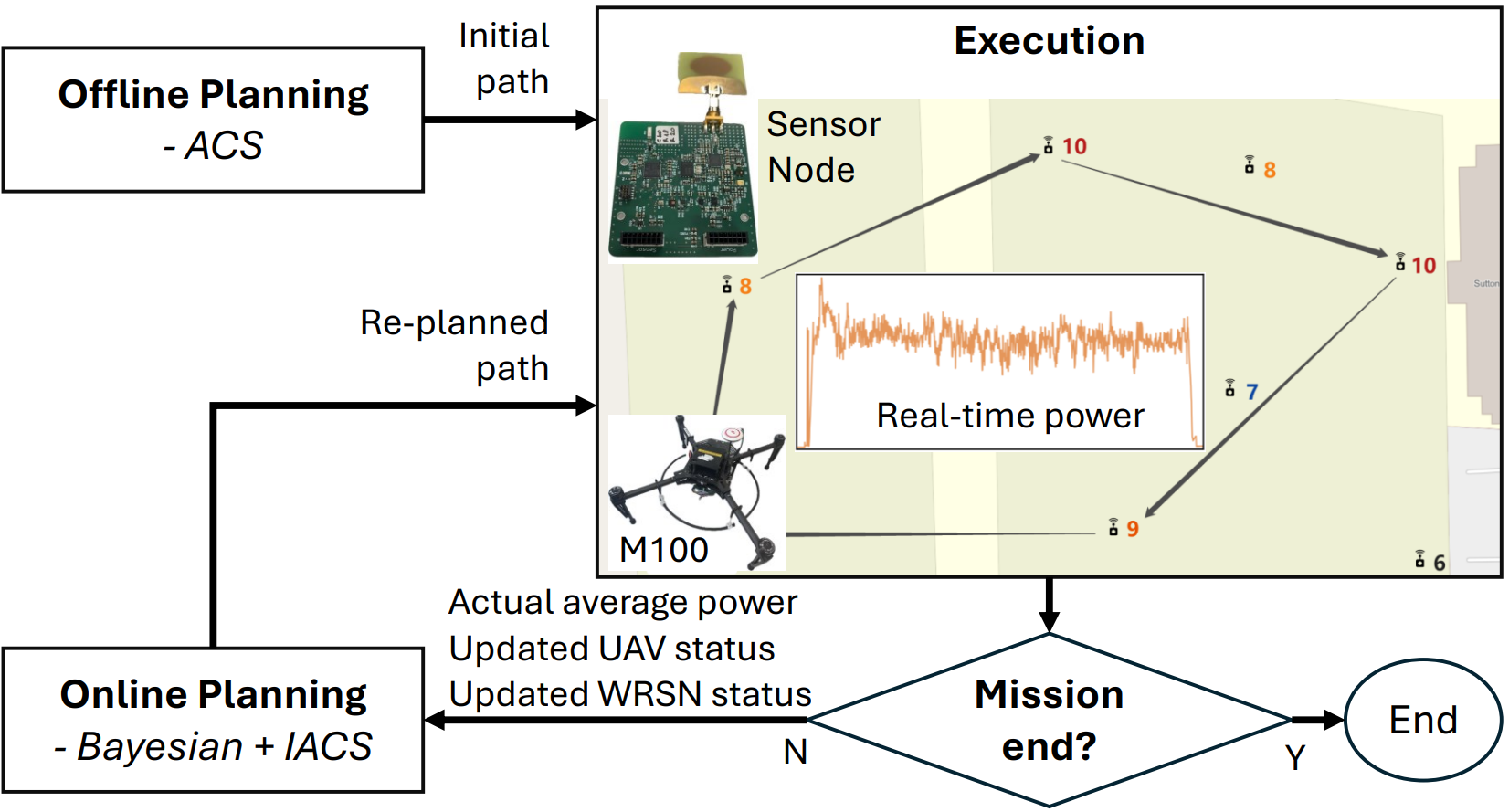}
    \caption{ADAPT framework. The UAV follows an initial path (generated offline), sequentially servicing sensor nodes. During flight, the UAV continuously logs power consumption, which informs subsequent planning triggered upon completing the task at each node. The iterative process of execution and online planning phases continues until the UAV returns to the end depot once all target nodes are recharged, or the residual energy is insufficient to continue.}
    \label{fig:flow-chart}
\end{figure}

\begin{table}[t]
\centering
\begin{threeparttable}[t]
\caption{Main parameters in ADAPT.} \label{tab:ALG-PARAMS}
\setlength\tabcolsep{0pt}
\begin{tabular*}{0.49\textwidth}{@{\extracolsep{\fill}} ll}
\toprule
Parameters & Definition \\
\midrule
$\bar{\mathrm{P}}$, $\bar{\mathrm{P}}^*$ & Estimated and actual average power during UAV flight (W). \\
$m$, $\rho$ & UAV weight (kg) and air density (kg/m$^3$). \\
$\mu_{\bar{\mathrm{P}}}$, $\mu_{\bar{\mathrm{P}}^*}$ & Mean of distributions for $\bar{\mathrm{P}}$ and $\bar{\mathrm{P}}^*$. \\ 
$\sigma_{\bar{\mathrm{P}}}$, $\sigma_{\bar{\mathrm{P}}^*}$ & Standard deviation of distributions for $\bar{\mathrm{P}}$, $\bar{\mathrm{P}}^*$. \\ 
$\Delta\mu$, $\Delta\sigma$ & The shift to original $\mu$ and $\sigma$ (\%). \\
$\Theta$ & Using CDF probabilities from the distribution as costs to \\ 
 & derive a feasible solution is having a safety belief $\Theta$ (\%). \\ 
$\mathcal{P}$, $\mathcal{C}$ & Prize (kJ) and Cost (kJ). \\
\bottomrule
\end{tabular*}
\end{threeparttable}
\end{table}

This section details the framework design of ADAPT for the CSP. ADAPT aims to provide outer loop control, dynamically adjusting the visitation sequence in response to environmental changes. It comprises three phases: offline planning, execution, and online planning (shown in Fig.~\ref{fig:flow-chart}). The UAV mission initiates with a pre-computed offline path. During the execution phase, it updates the observed travel costs (i.e., via continuous power consumption measurement), node prizes, and prize-collection costs for all unvisited sensor nodes. The execution and online planning phases are iteratively conducted until meeting termination criteria: successful charging of all target nodes or mandatory return-to-home due to insufficient energy to continue. Table~\ref{tab:ALG-PARAMS} includes main parameters used in ADAPT.

\subsection{Prior knowledge for edge costs} \label{sec:system-prior}
The complex UDOP can be approximated to a classic OP with static and deterministic prizes and costs under the CSP scenario. Because voltages of sensor nodes remain nearly the same over short intervals, we can assume static prizes and prize-collection costs during planning but re-plan regularly after each recharging operation. ADAPT decouples those dynamics from the planning phases by re-estimating chargeable energy and recharging costs in the execution phase. Furthermore, for UAVs operating at 25-100 m altitudes and ground speeds of 4-12 m/s, wind conditions varying from approximately 3.89 to 8.23 m/s have little impact on average energy consumption \cite{rodrigues2022drone}. Consequently, the induced power at hover in no-wind conditions serves as an adequate estimate for average power consumption $\bar{\mathrm{P}}$ during flight \cite{rodrigues2022drone}:
\begin{equation}
    \bar{\mathrm{P}} = b_1 \sqrt{\frac{m^3}{\rho}} + b_0
\end{equation}
where $b_1$ and $b_0$ are coefficients derived from linear regression analysis between induced power and actual average power. Constants $m$ and $\rho$ denote the UAV weight and air density, respectively. Table~\ref{tab:reg-model-coef} presents the trained coefficients for three distinct UAV regimes. For details, we refer the reader to supplementary files of \cite{rodrigues2022drone}.
\begin{theorem}
    The estimated average power consumption can be modeled as a normal distribution with mean $\mu_{\mathrm{\bar{P}}}=\mu(b_1) \sqrt{\frac{m^3}{\rho}} + \mu(b_0)$ and variance $\sigma^2_{\mathrm{\bar{P}}^*} = \sigma^2(b_1) \frac{m^3}{\rho} + \sigma^2(b_0)$.    
\end{theorem}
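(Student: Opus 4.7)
The plan is to derive the distribution of $\bar{\mathrm{P}}$ by treating it as an affine combination of the random regression coefficients $b_0$ and $b_1$, with the deterministic quantity $\sqrt{m^3/\rho}$ acting as a fixed scalar. Since $m$ (UAV weight) and $\rho$ (air density) are treated as constants in the model, the only sources of randomness are $b_0$ and $b_1$, which are estimates produced by linear regression on empirical power-consumption data.

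First, I would invoke the standard assumption from Gaussian linear regression: under normally distributed residuals, the fitted coefficients $b_0$ and $b_1$ are themselves normally distributed, with means $\mu(b_0)$, $\mu(b_1)$ and variances $\sigma^2(b_0)$, $\sigma^2(b_1)$ inherited from the covariance structure of the least-squares estimator. I would state this explicitly as the working hypothesis underlying the theorem. Then, using the fact that any affine combination of jointly normal random variables is itself normal, $\bar{\mathrm{P}} = \sqrt{m^3/\rho}\,b_1 + b_0$ is normal.

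Next, I would compute the moments by linearity of expectation and the standard variance-of-a-sum identity. The mean follows immediately: $\mathbb{E}[\bar{\mathrm{P}}] = \sqrt{m^3/\rho}\,\mu(b_1) + \mu(b_0)$, matching the stated $\mu_{\bar{\mathrm{P}}}$. For the variance, treating $b_0$ and $b_1$ as (approximately) uncorrelated,
\begin{equation}
\mathrm{Var}(\bar{\mathrm{P}}) = \Bigl(\sqrt{m^3/\rho}\Bigr)^2 \sigma^2(b_1) + \sigma^2(b_0) = \frac{m^3}{\rho}\,\sigma^2(b_1) + \sigma^2(b_0),
\end{equation}
which is precisely the claimed $\sigma^2_{\bar{\mathrm{P}}^*}$. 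Combining normality with the computed moments yields the stated distribution.

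The main obstacle is the independence (or at least zero-correlation) assumption between $b_0$ and $b_1$: in ordinary least squares these estimators are generally correlated unless the design matrix columns are orthogonal (e.g.\ the regressor is centered), in which case a cross-term $2\sqrt{m^3/\rho}\,\mathrm{Cov}(b_0,b_1)$ would need to be added to the variance. I would address this by either (i) citing that the regression fit in \cite{rodrigues2022drone} is performed with centered or otherwise decorrelated predictors so that the covariance term vanishes, or (ii) explicitly noting this as a standard simplifying assumption underpinning the theorem, consistent with the way $b_0$ and $b_1$ are later sampled independently in the Bayesian updates of Section~\ref{sec:system-online}.
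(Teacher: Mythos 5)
Your proof is correct and follows essentially the same route as the paper's: $\bar{\mathrm{P}}$ is an affine combination of the normally distributed coefficients $b_1$ and $b_0$, so normality, the mean, and the variance follow from linearity and the variance-of-a-sum identity under independence. The paper simply asserts that $b_0$ and $b_1$ are independent (as presented in Table~\ref{tab:reg-model-coef}) without further justification, so your explicit flagging of the potential $\mathrm{Cov}(b_0, b_1)$ cross-term is more careful than, but fully consistent with, the published one-line argument.
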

\begin{proof}
    See Appendix A.
\end{proof}
\noindent
Leveraging constraint \eqref{con:csp-travel-cost}, the average power consumption can be sampled at discrete time steps, with the summation of these samples providing an estimate of the travel cost. Thus, a static and deterministic graph suffices as input for planners.

\begin{table}[t]
\centering
\begin{threeparttable}[t]
\caption{Model coefficient $\pm$ bootstrap standard error \cite{rodrigues2022drone}.} \label{tab:reg-model-coef}
\setlength\tabcolsep{0pt}
\begin{tabular*}{0.485\textwidth}{@{\extracolsep{\fill}} cccc}
\toprule
Coefficient & Takeoff & Cruise & Landing \\
\midrule
$b_1$ & $80.4 \pm 2.6$ & $68.9 \pm 2.0$ & $71.5 \pm 1.7$ \\
$b_0$ & $13.8 \pm 18.9$ & $16.8 \pm 15.0$ & $-24.3 \pm 12.5$\\
\bottomrule
\end{tabular*}
\end{threeparttable}
\end{table}

\subsection{Offline planning phase}
The `offline' planning phase, executed once at the start depot, aims to generate a high-quality initial path for the UAV. Because designing a new algorithm for solving the static OP is beyond the scope of this study, we modify a well-established discrete metaheuristic as the main solver for offline and online planning phases. Specifically, we select the Ant Colony System (ACS) \cite{dorigo1997ant} (see Appendix B for details) due to its straightforward implementation, high adaptability, effectiveness for various instances of OP \cite{mandal2020survey}, and convergence towards the optimal solution \cite{Stutzle2002ACSproof}. It is essential to highlight that even an optimal offline solution cannot guarantee global performance for the whole mission. In practice, ACS exhibits a good balance between solution quality and computational efficiency, aligning with the dynamic nature of our online problem. To validate this, we compare ACS with an exact method implemented by Gurobi \cite{gurobi} and investigate the potential impact of different offline paths on later planning in Appendix C.

\subsection{Execution phase}
In this phase, the UAV proceeds to the next target node as determined by the offline or online planning. It continuously monitors and records the battery's real-time power output throughout the flight to calculate the actual average power consumption $\mathrm{\bar{P}}^*$. Upon completion of the recharging process at each node, the UAV updates the estimated chargeable energy $\mathcal{P}_{f_1} (v_i\:|\: t)$ and recharging cost $\mathcal{C}_{f_3} (v_i \:|\: t)$ for all unvisited sensor nodes based on the actual travel and charging time. 

\subsection{Online planning phase} \label{sec:system-online}
The online planner takes inputs of power observations, UAV status (coordinate and residual energy), and updated WRSN status (chargeable energy and recharging cost). If we assume $\bar{\mathrm{P}}^*$ follows a normal distribution with unknown mean $\mu_{\bar{\mathrm{P}}^*}$ and variance $\sigma^2_{\bar{\mathrm{P}}^*}$, these parameters can be inferred with a conjugate Normal-Gamma (NG) prior distribution through Bayesian Inference (BI) \cite{FriebeMPN21, Turlapaty20}. Employing BI offers advantages to estimate $\bar{\mathrm{P}}^*$ in the CSP because it: (a) incorporates prior knowledge to make estimation effective in early mission stages; (b) quantifies uncertainty in an interval, enabling robust decision-making under variable conditions; (c) allows continuous updating of estimates as new data becomes available, making it ideal for online scenarios where power consumption fluctuates; (d) convergence to the true distribution as observations accumulate \cite{savchuk2011bayesian}. Moreover, pre-training a regression model as in \cite{rodrigues2022drone} can ease BI's limitations, e.g., the need to specify prior distributions and sensitivity to prior. To validate BI performance, we compare it to alternative methods, including frequentist and heuristic-based approaches (see Sections \ref{sec:bayesian-performance} and \ref{sec:alg-performance}).

The hyperparameters of $\boldsymbol{\mathsf{NG}}(\mu_{\bar{\mathrm{P}}^*}\: ,\: \sigma^{-2}_{\bar{\mathrm{P}}^*} \:|\: \mu_0, \kappa_0, \alpha^{\text{BI}}_0, \beta^{\text{BI}}_0)$ can be determined using $\mu_{\bar{\mathrm{P}}}$ and $\sigma_{\bar{\mathrm{P}}}$. Following \cite{murphy2007conjugate}, the posterior parameters can be updated as:
\begin{subequations}
\begin{flalign}
    \label{eq:post-updated-mu}
    &\mu_n = \frac{\kappa_0 \mu_0 + n \bar{\mathbf{x}}}{\kappa_0 + n}&\\
    \label{eq:post-updated-kappa}
    &\kappa_n = \kappa_0 + n&\\
    \label{eq:post-updated-alpha}
    &\alpha_n^{\text{BI}} = \alpha_0^{\text{BI}} + n/2&\\
    \label{eq:post-updated-beta}
    &\beta_n^{\text{BI}} = \beta_0^{\text{BI}} + \frac{1}{2} \sum_{i=1}^{n} (x_i - \bar{\mathbf{x}})^2 + \frac{\kappa_0 n (\bar{\mathbf{x}} - \mu_0)^2}{2 (\kappa_0 + n)}&
\end{flalign}
\end{subequations}
where $n$ refers to sample size, and $\bar{\mathbf{x}}$ denotes sample mean. We employ a sliding time window to omit out-of-date observations. The window length is determined by two factors: the observed average power consumption may be affected by uncommon extreme values (e.g., significant fluctuations due to wind gusts or turbulence \cite{kim2020flight}), but it should still reflect recent environmental changes. The observed data and NG prior result in a posterior predictive of a Student-t distribution with center at $\mu_n$, precision $\Lambda = \frac{\alpha_n^{\text{BI}} \kappa_n}{\beta_n^{\text{BI}} (\kappa_n + 1)}$ and degree of freedom $\nu = 2\alpha_n^{\text{BI}}$ \cite{murphy2007conjugate}. Therefore, new estimated average power consumption can be sampled from this location-scale t distribution, i.e.,
\begin{equation}
    \mathrm{\bar{P}} \sim \mu_n + \boldsymbol{\mathsf{t}}_{2\alpha_n^{\text{BI}}} \sqrt{\frac{\beta_n^{\text{BI}} (\kappa_n + 1)}{\alpha_n^{\text{BI}} \kappa_n}}
\end{equation}
The posterior's Cumulative Distribution Function (CDF) is used to derive various power consumption levels for takeoff, cruise, and landing, forming potential edge costs for travel. A path planned using average power $\bar{\mathrm{P}}$, obtained with CDF probability $\Theta = p(X \leq \bar{\mathrm{P}})$, is defined as having a \textit{safety belief} $\Theta$ to complete the mission. Thus, we reduce UDOP to a deterministic and static problem with a given $\Theta$ value.

The internal solver is a modified version of the Inherited ACS (IACS) proposed by \cite{qian2024ceopn} (see Appendix B). The inheritance mechanism, designed to advance convergence, naturally aligns with the iterative process of execution and online planning. Specifically, the best path from the preceding planning iteration is a superior initialization for the pheromone matrix, surpassing the conventional nearest neighbor heuristic. The drop operator ensures path feasibility, eliminating low-value nodes to adhere to budget constraints. The add operator aims to improve path quality by inserting high-value feasible nodes to maximize prize collection. IACS is then applied multiple times to search candidate paths within a safety belief range of $\big[\Theta_{\min}, \Theta_{\max}\big]$. The final output is the path with the highest weighted score between safety belief and solution quality:
\begin{equation}
    \boldsymbol{S} = \arg\max_{S_i} \Big\{w_{\Theta} \frac{\Theta_i - \Theta_{\min}}{\Theta_{\max} - \Theta_{\min}} + w_{\mathcal{P}} \frac{\mathcal{P}(S_i) - \mathcal{P}_{\min}}{\mathcal{P}_{\max} - \mathcal{P}_{\min}} \Big\}
\end{equation}
\noindent
where $w_{\Theta}$ and $w_{\mathcal{P}}$ are factors to balance the weight of safety belief and prize collection.

\section{Experiments, Results and Discussion} \label{sec:experiment}
This section presents numerical results, evaluating the performance and robustness of ADAPT and benchmark approaches. We employ three test instances, denoted as \textit{California20}, \textit{California30}, and \textit{California40}, representing WRSNs randomly deployed in 1 km$^2$ area in California, with an increasing number of sensor nodes\footnote[2]{Software implementation and experimental results of ADAPT are available by link \url{https://github.com/sysal-bruce-publication/Uncertain-Dynamic-OP.git}. \label{fn-github}}. To assess the algorithms' robustness under uncertain and variable conditions, we \textbf{shift and scale} prior normal distributions with coefficients in Table~\ref{tab:reg-model-coef} to represent distributions of actual average power consumption $\mathrm{\bar{P}^*}$, which is unknown to the planner. For example, a windy scenario might be characterized by distributions with actual mean $\mu_{\mathrm{\bar{P}}^*} = 110\% \: \mu_{\bar{\mathrm{P}}}$ and SD $\sigma_{\mathrm{\bar{P}}^*} = 120\% \: \sigma_{\bar{\mathrm{P}}}$. For simplicity, we denote this adjustment as $\Delta\mu_{\mathrm{\bar{P}}^*} = 10\%, \Delta \sigma_{\mathrm{\bar{P}}^*} = 20\%$, respectively. Unless specifically stated otherwise (e.g., to highlight specific scenarios), we evaluate each approach using the stated test instances under various actual power distributions, i.e., $\Delta\mu_{\mathrm{\bar{P}}^*},\Delta\sigma_{\mathrm{\bar{P}}^*} \in \{-10, 0, 10, 20\}\%$. Though 20 individual executions are proven sufficient to examine repeatability \cite{qian2024ceopn}, we increase executions to 50 due to random sampling from distributions. We assess algorithm robustness and performance using three metrics: Mission Success Rate (MSR), actual collected prize $\mathcal{P}^*$, and actual cost $\mathcal{C}^*$.

\subsection{Parameter setting} \label{sec:param-setting}
The execution phase models the UAV's energy consumption primarily through travel and service operations within the context of CSP. For travel modeling, parameters are based on \cite{rodrigues2021flight, rodrigues2022drone, dji2016m100}. The total weight of the M100 drone is 3.93 kg, including a TB47D battery (359.64 kJ capacity) and 0.25 kg payload of induction coil and driving circuits. The air density is set to a common value $\rho=1.225$ kg/m$^3$ \cite{zhang2021energy}. The UAV flight protocol consists of takeoff (ascend to $H = 30$ m with an average speed $\mathrm{v}_{\text{tk}} = 3$ m/s), cruise (travel to the next waypoint with an average speed $\mathrm{v}_{\text{cr}} = 10$ m/s) and landing (descend to the ground with an average speed $\mathrm{v}_{\text{ld}} = 2$ m/s). The estimated average power consumption $\bar{\mathrm{P}}$ for each regime is sampled from normal distributions with coefficients stated in \cite{rodrigues2022drone}, i.e., $\bar{\mathrm{P}}_{\text{tk}} \overset{\mathrm{i.i.d.}}{\sim} \boldsymbol{\mathsf{N}}(579.75, 692.16)$, $\bar{\mathrm{P}}_{\text{cr}} \overset{\mathrm{i.i.d.}}{\sim} \boldsymbol{\mathsf{N}}(501.80, 423.20)$, $\bar{\mathrm{P}}_{\text{ld}} \overset{\mathrm{i.i.d.}}{\sim} \boldsymbol{\mathsf{N}}(479.00, 299.45)$. The service simulation models the UAV's recharging process for homogeneous sensor nodes. Following experimental results in \cite{arteaga2023high}, we set the IPT link efficiency $\eta_{\text{IPT}} = 40\%$, the CC charger efficiency $\eta_{\text{CC}} = 90\%$ and the energy depletion rate of sensor nodes $R_{\text{SN}} = 2.19 \cdot 10^{-6}$ kJ/s \cite{polonelli2020flexible}.

\begin{figure}[b]
    \hspace*{-0.4cm}
    \centering
    \includegraphics[width=0.52\textwidth]{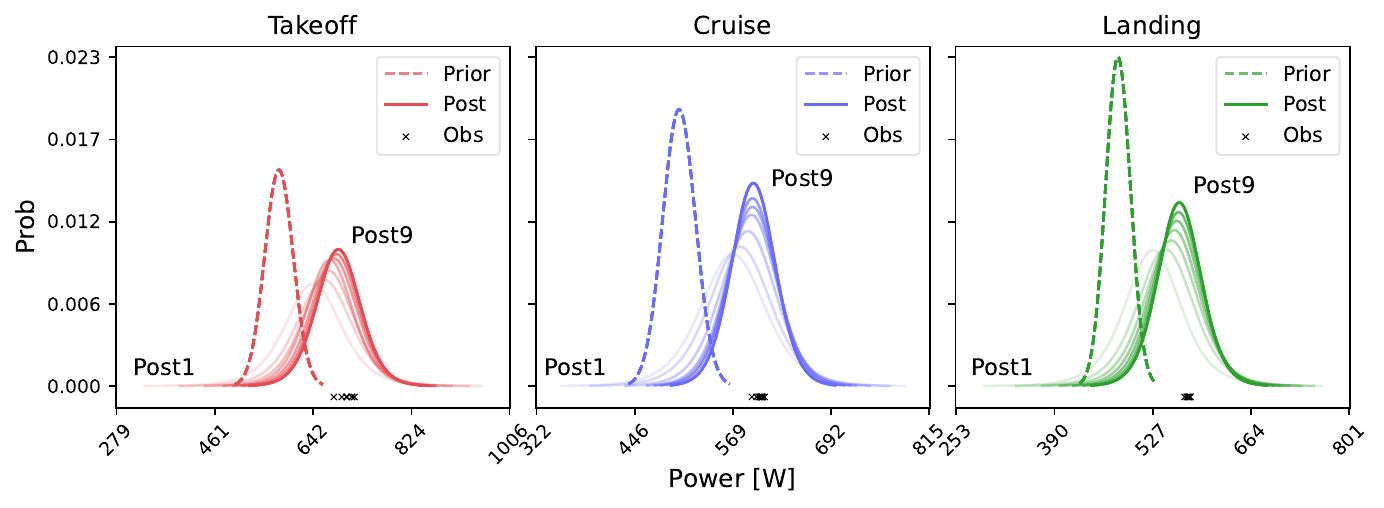}
    \caption{An example of how ADAPT updates posterior distributions using online observations. Nine re-plannings happened during this mission, moving from the most diverged distribution (Post1) to the most centralized one (Post9).}
    \label{fig:all-dists}
\end{figure}

\begin{figure*}[b] 
    \centering
    \hspace*{0.25cm}
    \includegraphics[width=0.95\textwidth]{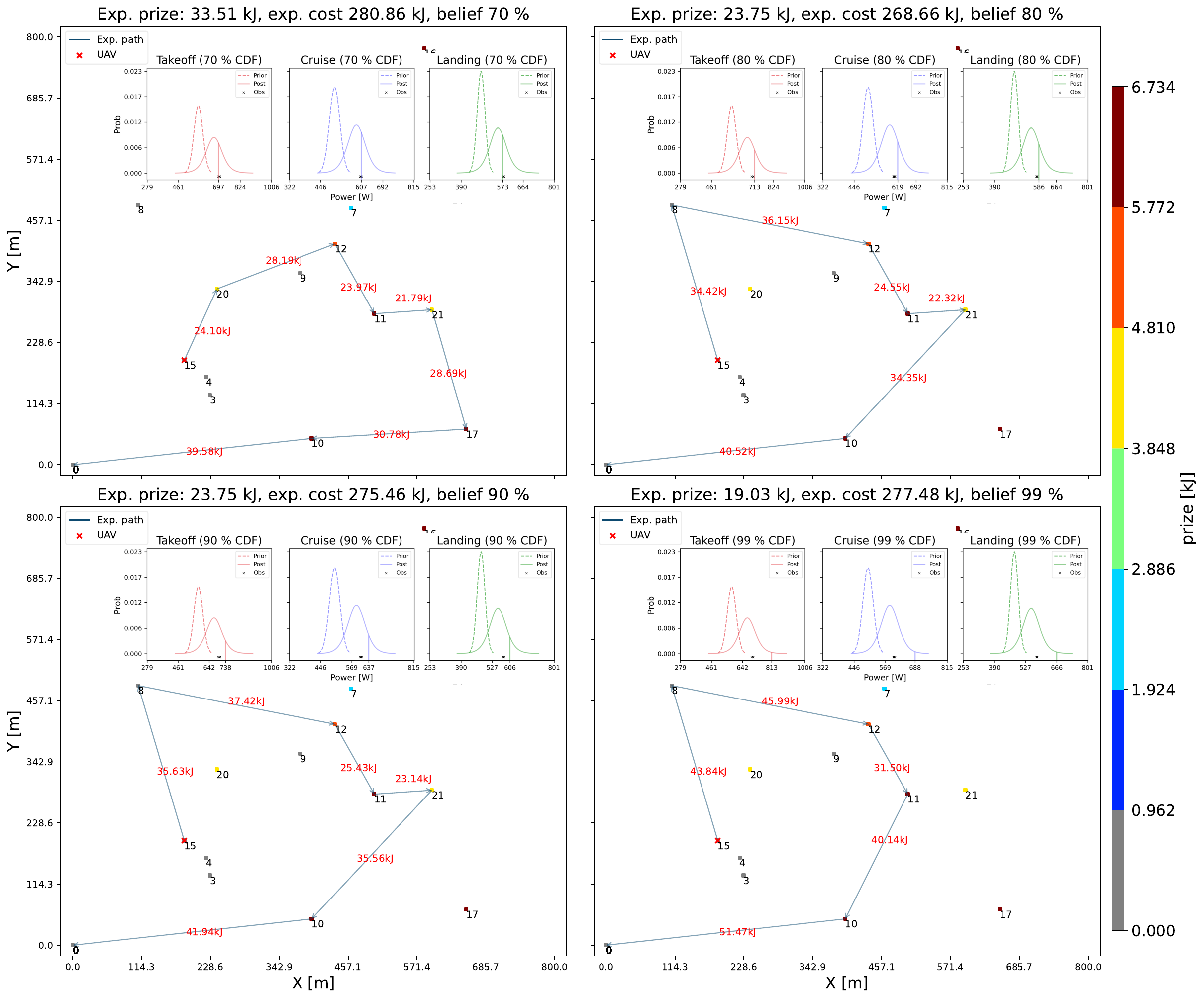}
    \caption{An example of ADAPT solving \textit{California20} with $\Delta\mu_{\mathrm{\bar{P}}^*}= 20\%$ and $\Delta\sigma_{\mathrm{\bar{P}}^*} = 0\%$. When the UAV recharged sensors 4 and 15, its residual energy (budget) is 283.08 kJ. Here we show four typical candidate paths of ADAPT with safety belief $\Theta\in \{70, 80, 90, 99\}\%$.}
    \label{fig:path}
\end{figure*}

During the execution, we assume the period of average power reading as 20 seconds and the sliding time window length as the latest 15 minutes to balance historical observation utilization and temporal sensitivity. For Bayesian Inference, we set $\alpha_0^{\text{BI}} = 2$, $\beta_0^{\text{BI}} = \sigma^2_{\bar{P}}$, $\mu_0 = \mu_{\bar{P}}$, and $\kappa_0 = 1$ to employ a weekly informative prior knowledge about the mean and variance. We assign unbiased weights to the safety belief and prize collection, i.e., $w_{\Theta} = w_{\mathcal{P}} = 50\%$ in the whole mission. The sensitivity analysis of $\Theta_{\min}$ with a range from $45\%$ to $85\%$ is presented in Appendix E. Our experimental results indicate that smaller $\Theta_{\min}$ values (e.g., $45\%$ and $55\%$) generally lead to solutions with slightly higher prizes but lower mission success rates. Therefore, we set $\Theta_{\min} = 75\%$ and $\Theta_{\max} = 99.9\%$ to prioritize mission safety with a minor sacrifice of prize collection. Later, we demonstrate that ADAPT can still outperform other benchmark approaches with $\Theta_{\min} = 75\%$ under some scenarios in Section \ref{sec:alg-performance}. 

Based on \cite{qian2024ceopn}, we set solver parameters as: Number of ants $N_{\text{ant}} = 40$, Number of iterations $N_{\text{it}} = 250$, heuristic importance factor $\beta_{\text{ACS}} = 2$, pheromone evaporation rate $\alpha_{\text{ACS}} = \rho = 0.1$. In the offline planning phase, the initial pheromone $\tau_0 = \mathcal{P}_{\text{nn}} / (\mathcal{C}_{\text{nn}} \cdot (|\mathcal{S}_{\text{nn}}| - 1))$, where $\mathcal{P}_{\text{nn}}, \mathcal{C}_{\text{nn}},$ and $|\mathcal{S}_{\text{nn}}|$ are the path prize, path cost, and path length of the solution achieved by nearest neighbor heuristic, respectively. While in the online planning phase, the initial pheromone $\tau_0 = \mathcal{P}^{S^{\text{gb}}(n_{\text{it}}-1)} / (\mathcal{C}^{S^{\text{gb}}(n_{\text{it}}-1)} \cdot (|S^{\text{gb}}(n_{\text{it}}-1)| - 1))$, where $S^{\text{gb}}(n_{\text{it}}-1)$ is the global-best solution obtained from previous iteration. We establish a minimum improvement tolerance $\epsilon_{\text{ACS}} = 10^{-4}$, which means ACS would terminate if the fitness difference is less than $\epsilon_{\text{ACS}}$ for several iterations. To balance the computation time and solution quality, we allow a maximum number of no improvements as $N_{\text{impr}} = N_{\text{ACS}} / 10 = 25$. 

\subsection{The online planning phase with a Bayesian approach} \label{sec:bayesian-performance}
We first demonstrate the performance of the online planning phase through a working example of a \textit{California20} mission. In this scenario, we set the actual power consumption mean $20\%$ higher than estimated $\Delta \mu_{\mathrm{\bar{P}}^*} = 20\%$ while keeping its SD unchanged $\Delta \sigma_{\mathrm{\bar{P}}^*} = 0\%$. Fig.~\ref{fig:all-dists} illustrates the evolution of posterior distributions (from Post1 to Post9), based on given prior distributions and continuous online observations mission, with $\Delta \mu_{\mathrm{\bar{P}}^*} = 20\%$ and $\Delta \sigma_{\mathrm{\bar{P}}^*} = 0\%$. Post1 exhibits a large SD due to insufficient samples, and the observed data significantly differs from the prior mean. As additional observations accumulate, posterior distributions demonstrate increasing centralization, reflecting ADAPT's adaptive learning process. ADAPT subsequently updates the edge cost matrix using these posterior distributions. Fig.~\ref{fig:path} demonstrates the impact of varying safety belief ($\Theta$) values (representing the confidence level in completing the mission) on solution quality for the above scenario. Paths with higher $\Theta$ values tend to be more conservative, while those with lower $\Theta$ values expect to charge more nodes. Notably, using $\Theta$ values of $80\%$ and $90\%$ yields identical solutions, suggesting this path can accommodate higher travel costs without compromising expected prize collection. This result underscores the ADAPT's capability to balance risk and reward effectively.

\begin{figure}[b]
    \hspace*{-0.35cm}
    \centering
    \includegraphics[width=0.51\textwidth]{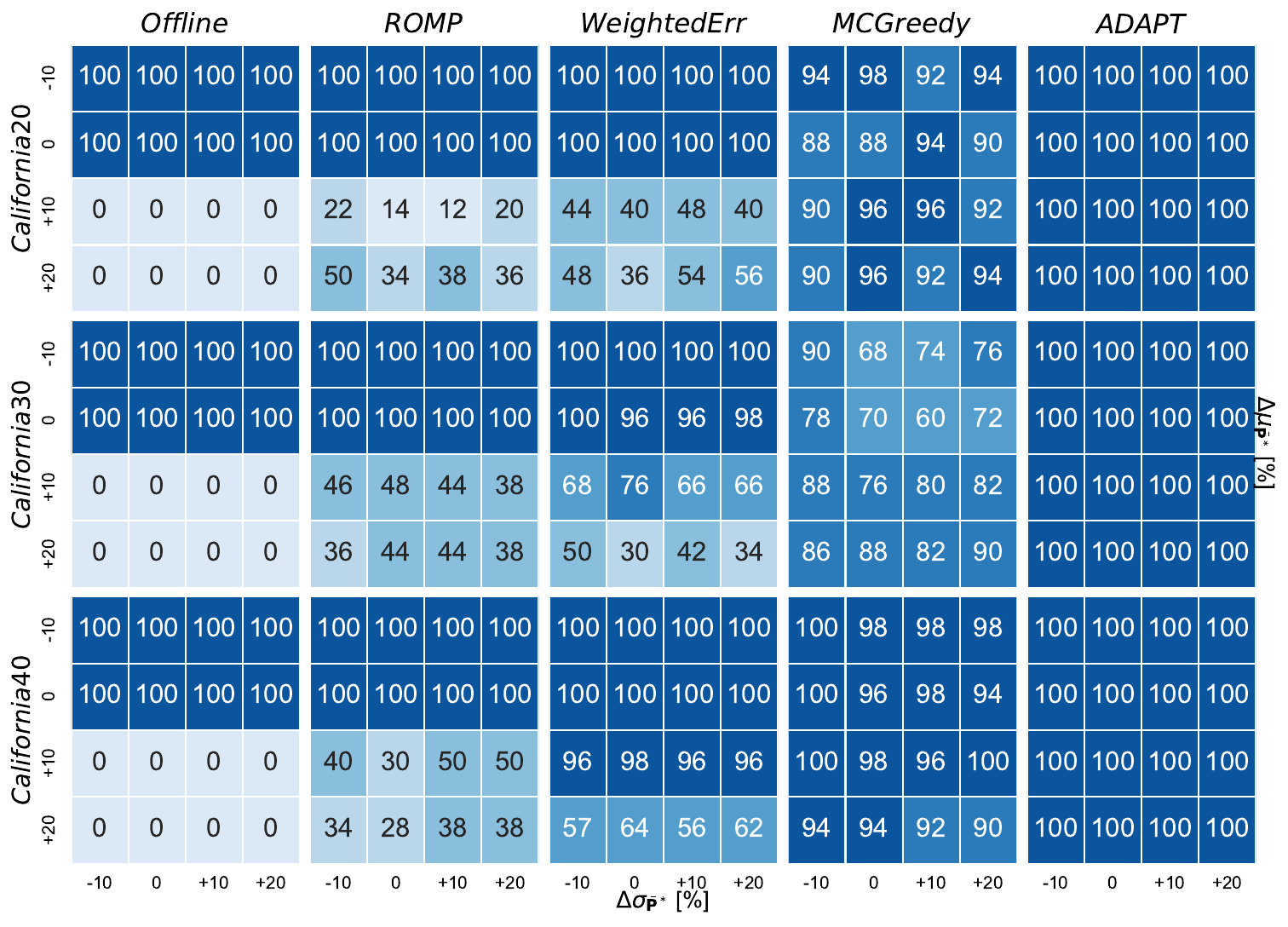}
    \caption{Mission success rate over 50 executions.}
    \label{fig:prob-of-succ}
\end{figure}

\subsection{ADAPT performance analysis of computation time, mission safety and solution quality} \label{sec:alg-performance}
We evaluate ADAPT's performance against four alternative approaches: 
\begin{itemize}[leftmargin=*]
    \item \textit{Offline} always follows the initial offline path during the whole mission. It shows the performance of a static approach that ignores new information during the mission.
    \item Rapid Online Mission Planner (\textit{ROMP} \cite{qian2022practical}) represents a simple adaptive strategy. It re-plans at each node, using prior travel costs estimated during the offline planning phase.
    \item \textit{WeightedErr} is a heuristic-based method that dynamically updates energy costs from recent observations. It calculates the weighted error ratio between estimated and actual energy costs from the most recent travel: $R_{\text{err}} = w_\text{act} (\frac{\Delta \mathrm{E}_\text{act} - \Delta \mathrm{E}_\text{est}}{\Delta \mathrm{E}_{\text{est}}} + 1) + w_{\text{est}}$. Energy costs of all feasible edges are then updated as $\mathrm{E}'(v_i, v_j) = R_{\text{err}} \cdot \mathrm{E}(v_i, v_j)$. We set fixed weights $w_\text{act} = w_\text{est} = 0.5$ to balance sensitivity to estimation errors. 
    \item Monte Carlo Greedy (\textit{MCGreedy} \cite{angelelli2021dpop}) randomly samples $N_{\text{MC}}$ power levels between the minimum and maximum observed power consumption. The final output is the candidate path with the highest occurrence frequency. We set the number of samples $N_{\text{MC}} = 100$ as in \cite{angelelli2021dpop}. As a frequentist approach, \textit{MCGreedy} compares to ADAPT regarding how uncertainty is quantified and used in decision-making.
\end{itemize}
To ensure a fair comparison, all approaches employ the same solver (i.e., IACS) in the online re-planning phase. Because all approaches can complete computation within seconds (see Appendix D), we omit execution time in the following results.

\begin{table}[b]
\centering
\begin{threeparttable}[t]
\caption{Solution quality comparison.} \label{tab:sol-quality-short}
\setlength{\tabcolsep}{0pt}
\begin{tabular*}{0.489\textwidth}{@{\extracolsep{\fill}} ccccccccccc}
\toprule
& \multicolumn{1}{c}{$\Delta\mu_{\mathrm{\bar{P}}^*}$} & \multicolumn{1}{c}{$\Delta\sigma_{\mathrm{\bar{P}}^*}$}  & \multicolumn{2}{c}{\textit{ROMP}} & \multicolumn{2}{c}{\textit{WeightedErr}} & \multicolumn{2}{c}{\textit{MCGreedy}} & \multicolumn{2}{c}{\textit{ADAPT}} \\ 
\cline{4-5} \cline{6-7} \cline{8-9} \cline{10-11}
 & (\%) & (\%) & $\mathcal{P}^*$(kJ) & $\mathcal{C}^*$(kJ) & $\mathcal{P}^*$(kJ) & $\mathcal{C}^*$(kJ) & $\mathcal{P}^*$(kJ) & $\mathcal{C}^*$(kJ) & $\mathcal{P}^*$(kJ) & $\mathcal{C}^*$(kJ) \\ \midrule
\multirow{8}{*}{\rot{\small\textit{California20}}} & 10 & -10 & \textbf{45.45} & 357.71 & 44.55 & 355.27 & 43.21 & 351.03 & 44.57 & 349.93 \\
 & 10 & 0 & \textbf{46.22} & 359.27 & 44.48 & 355.43 & 43.15 & 351.21 & 44.47 & 350.05 \\
 & 10 & 10 & \textbf{45.84} & 358.48 & 43.96 & 354.05 & 42.53 & 350.43 & 44.40 & 349.98 \\
 & 10 & 20 & \textbf{45.71} & 357.44 & 45.02 & 356.91 & 42.80 & 350.41 & 44.72 & 351.77 \\
 & 20 & -10 & 39.57 & 357.19 & \textbf{42.05} & 357.94 & 40.93 & 351.68 & 40.98 & 349.41 \\
 & 20 & 0 & 39.80 & 357.71 & \textbf{42.25} & 357.96 & 40.97 & 351.91 & 41.07 & 349.95 \\
 & 20 & 10 & 39.56 & 358.14 & \textbf{42.06} & 358.34 & 40.90 & 352.05 & 41.01 & 349.79 \\
 & 20 & 20 & 39.53 & 357.21 & \textbf{42.04} & 358.28 & 40.90 & 351.09 & 41.07 & 349.76 \\
 \midrule
\multirow{8}{*}{\rot{\small\textit{California40}}} & 10 & -10 & 45.13 & 338.87 & 48.52 & 354.14 & 46.85 & 345.28 & \textbf{49.04} & 353.57 \\
 & 10 & 0 & 44.57 & 335.98 & 48.70 & 353.89 & 47.02 & 345.36 & \textbf{49.35} & 354.43 \\
 & 10 & 10 & 44.54 & 337.14 & 48.36 & 354.26 & 46.91 & 344.19 & \textbf{49.18} & 353.90 \\
 & 10 & 20 & 45.11 & 337.39 & 48.30 & 354.09 & 46.73 & 344.42 & \textbf{49.12} & 353.72 \\
 & 20 & -10 & 42.50 & 337.16 & 42.63 & 339.23 & 45.00 & 351.71 & \textbf{46.78} & 352.76 \\
 & 20 & 0 & 42.72 & 335.22 & 43.64 & 345.10 & 44.77 & 349.53 & \textbf{46.96} & 353.16 \\
 & 20 & 10 & 42.90 & 335.65 & 43.48 & 343.63 & 44.02 & 346.81 & \textbf{46.85} & 352.75 \\
 & 20 & 20 & 43.41 & 337.87 & 43.15 & 342.36 & 44.26 & 349.39 & \textbf{46.70} & 352.45 \\
\bottomrule
\end{tabular*}
\scriptsize
\parbox[b]{8.5cm}{The \textbf{bold} value indicates the best result.}
\end{threeparttable}
\end{table}

\subsubsection{Mission Success Rate} \label{sec:MSR}
Fig.~\ref{fig:prob-of-succ} presents the percentage of safe returns to the end depot (i.e., the UAV has more than the minimum allowed energy level) across 50 individual executions for each approach. Within every execution, all approaches utilize the same offline path. Our results show that ADAPT consistently achieves $100\%$ MSR across all test scenarios. This success may be attributed to its ability to identify high-quality paths with strong safety beliefs from the early stages of a mission. \textit{MCGreedy} also performs well on average because the most common path generally has good quality with certain robustness (as shown in Fig.~\ref{fig:path}). However, the random nature of MC approaches results in unstable performance, as evidenced in the \textit{California30} with low $\Delta\mu_{\mathrm{\bar{P}}^*}$. In contrast, \textit{ROMP} often fails in scenarios with high $\Delta\mu_{\mathrm{\bar{P}}^*}$, as it tends to overestimate the UAV's capacity, leading to delayed recognition of the mission failure risk. Although \textit{WeightedErr} incorporates online information for re-planning, its static weighting $w_\text{act} = w_\text{est} = 0.5$ proves inadequate to compensate for errors when $\Delta\mu_{\mathrm{\bar{P}}^*}$ is high. This highlights a practical challenge in determining global optimal values for $w_\text{act}$ and $w_\text{est}$, which requires extensive prior knowledge. Note that when solving \textit{California30} with $\Delta\mu_{\mathrm{\bar{P}}^*} = 0$, \textit{WeightedErr} exhibits a few failed paths, despite \textit{Offline} achieving a $100\%$ MSR. This occurs because the estimated path costs are close to the budget at each planning, leading to low error tolerance. The actual costs of these failed paths (360.37, 360.85, 359.72, and 359.82 kJ) all marginally exceed the budget constraint (359.64 kJ). 

\begin{figure*}[t]
    \hspace*{0.2cm}
    \centering
    \includegraphics[width=0.954\textwidth]{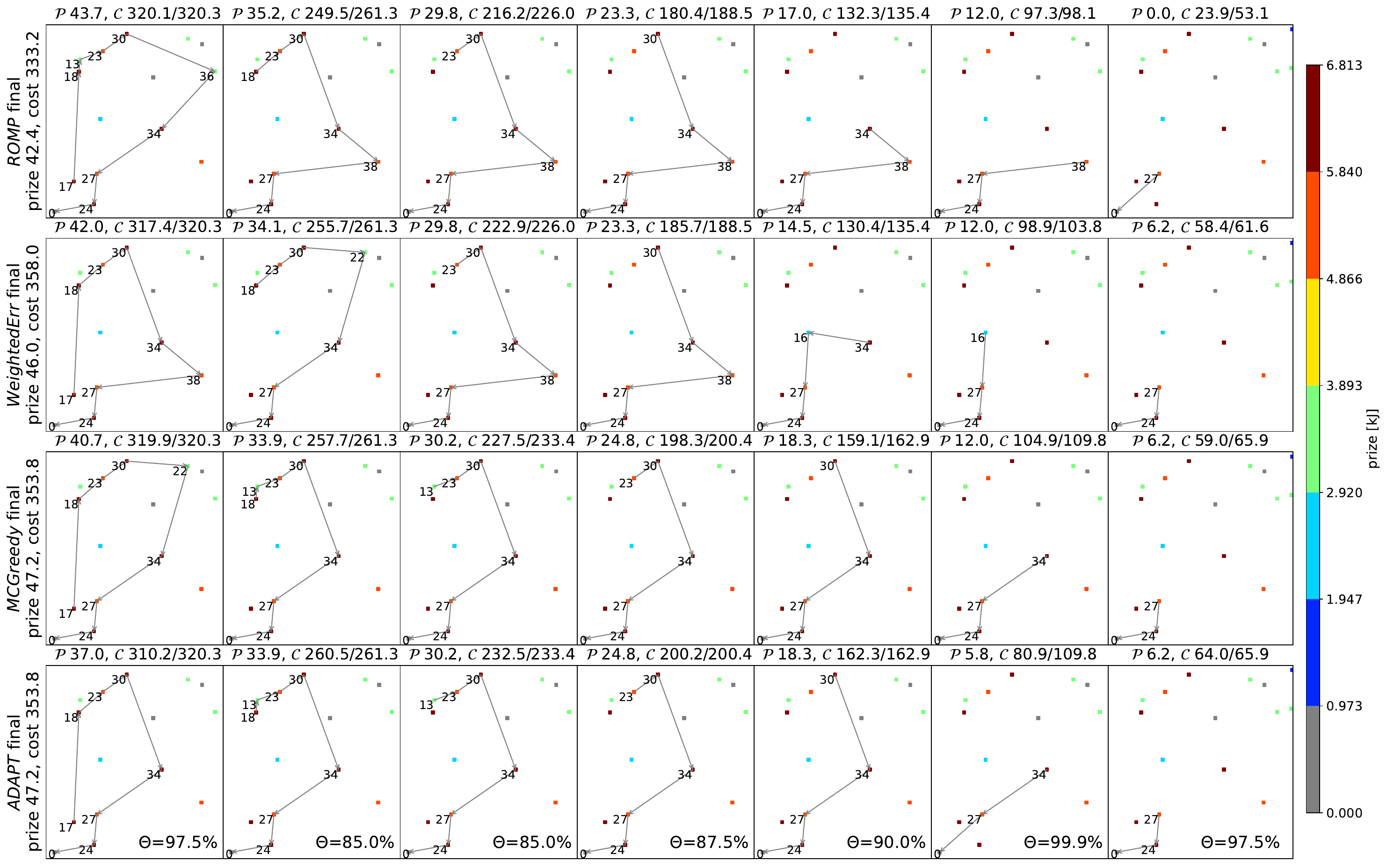}
    \caption{Online re-planning processes comparison for a typical execution of \textit{California40} with $\Delta\mu_{\mathrm{\bar{P}}^*}=20\%$ and $\Delta\sigma_{\mathrm{\bar{P}}^*}=-10\%$. Indices for sensor nodes not in the path are hidden. The value after `/' is the budget, and the initial budget is 359.64 kJ.}
    \label{fig:sol-analysis}
\end{figure*}

\subsubsection{Path prizes and costs}
Table~\ref{tab:sol-quality-short} presents the average solution quality for successful paths in solving \textit{California20} and \textit{California40} scenarios with high $\Delta\mu_{\mathrm{\bar{P}}^*}$ values (see Appendix F for full results). Note that SDs for these 50 executions are omitted because offline paths can have a weak effect on the final solution quality (as stated in Appendix C), and random sampling from distributions (especially for high $\Delta\sigma_{\mathrm{\bar{P}}^*}$) may introduce considerable uncertainty. ADAPT yields high-quality solutions across most scenarios compared to other approaches. In \textit{California20}, long distances between sensor nodes potentially lead to excessive energy costs for achieving high $\Theta$ values. Consequently, ADAPT adopts a more conservative strategy, as inserting or changing to new nodes becomes difficult. In the \textit{California40} scenario of Table~\ref{tab:sol-quality-short}, ADAPT tends to generate a high $\Theta$ path at early stages, subsequently improving it as more data is observed. 

\subsubsection{Mission process analysis}
Table~\ref{fig:sol-analysis} illustrates typical mission processes of all online approaches for \textit{California40} with $\Delta\mu_{\mathrm{\bar{P}}^*} = 20\%$ and $\Delta\sigma_{\mathrm{\bar{P}}^*} = -10\%$. \textit{MCGreedy} and ADAPT only insert node 13 during the mission (they have the same final path). In contrast, \textit{ROMP} and \textit{WeightedErr} underestimate actual power consumption, leading them to include nodes with high prizes (but deviating from the overall path) when the UAV has sufficient residual energy. Consequently, some nodes with low prizes but lower costs (or less risky) are dropped. For instance, in Fig.~\ref{fig:sol-analysis}, \textit{ROMP} drops nodes 13 and 36 to incorporate node 38 in the path. While the solutions of \textit{WeightedErr} and ADAPT suggest that dropping node 36 may be a viable strategy, visiting node 38 results in insufficient energy to visit and recharge node 24, ultimately leading to an inefficient solution. Similarly, \textit{WeightedErr} drops node 13 to incorporate node 38. Although it recognizes the risk of including node 38 when the UAV is at node 34 and attempts to compensate for the prize loss by switching to recharge node 16, the quality of the final solution is still compromised. 

\section{Conclusion and Future Work} \label{sec:conclusion}

This paper develops the UDOP, which aims to identify an optimal path that initiates from a start depot and returns to an end depot, maximizing the prize collection within the budget constraint under uncertain environments. UDOP differs from other OP variants with travel costs following distributions with unknown dynamic parameters and the potential impact of uncertain travel costs on node prizes and associated prize-collection costs. We propose a novel approach, ADAPT, to address the UDOP. In ADAPT, the offline planner generates an initial solution using prior knowledge of edge costs; the execution phase updates the mission execution status and records observations to online costs; the online planner employs a Bayesian approach to infer the parameters of edge cost distributions and determines the cost level (safety belief) for the solver, i.e., IACS. Because the re-planning happens at each node, the impact of uncertain edge costs is naturally involved in the optimization process. Our experimental results demonstrate that ADAPT can achieve a $100\%$ Mission Success Rate among all test instances. ADAPT can even yield solutions that outperform other benchmark approaches in some scenarios where the expected energy cost is much less than the actual.

We highlight several opportunities for additional research to explore UDOP further. The framework could be extended to incorporate uncertainties in prize-collection costs. For instance, in the CSP scenario, IPT link efficiency varies with coil alignment and medium properties. Moreover, ADAPT can be extended for various UAV types by pre-training a linear regression model using field flight data. Our future work will test ADAPT's generalization across various UAV types and conduct field experiments to verify its performance. We will also study the interplay between trajectory and mission planning. Examining how uncertainties are handled by the local controller (e.g., as presented in \cite{wang2024constrained}) can guide the global mission planner (outer loop control) to determine safety belief bounds and how safety belief affects trajectory tracking precision tolerance may advance autonomous system capabilities in Internet of Things contexts. Finally, extending UDOP to collaborative multi-UAV scenarios may require reformulation as a Team Orienteering Problem \cite{CHAO1996464} or Vehicle Routing Problem variant. This remains an open challenge that would require incorporating online information exchange and coordination.

\section*{Acknowledgments}
This work has been partially supported by the CHEDDAR: Communications Hub for Empowering Distributed ClouD Computing Applications and Research funded by the UK EPSRC under grant numbers EP/Y037421/1 and EP/X040518/1.

\bibliographystyle{IEEEtran}
\bibliography{ref}

\newpage

\appendices

\section{Proof of Theorem 1} \label{appdix:proof-theorem1}
\begin{theorem}
      The estimated average power consumption can be modeled as a normal distribution with mean $\mu_{\mathrm{\bar{P}}}=\mu(b_1) \sqrt{\frac{m^3}{\rho}} + \mu(b_0)$ and variance $\sigma^2_{\mathrm{\bar{P}}^*} = \sigma^2(b_1) \frac{m^3}{\rho} + \sigma^2(b_0)$.  
\end{theorem}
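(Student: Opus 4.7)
The plan is to treat $b_1$ and $b_0$ as independent normal random variables (justified by the bootstrap analysis cited from \cite{rodrigues2022drone}, where $\mu(b_i)$ and $\sigma(b_i)$ are reported in Table~\ref{tab:reg-model-coef}) and to propagate them through the affine map defining $\bar{\mathrm{P}}$. Since $m$ and $\rho$ are deterministic constants, the quantity $c := \sqrt{m^3/\rho}$ is a fixed scalar, and $\bar{\mathrm{P}} = c\, b_1 + b_0$ is a linear combination of two independent Gaussians.

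First, I would recall the stability of the normal family under affine transformations: if $b_1 \sim \boldsymbol{\mathsf{N}}(\mu(b_1), \sigma^2(b_1))$, then $c\,b_1 \sim \boldsymbol{\mathsf{N}}(c\,\mu(b_1), c^2 \sigma^2(b_1))$. Next, using the fact that the sum of two independent normals is again normal, with mean equal to the sum of means and variance equal to the sum of variances, I would combine $c\,b_1$ with $b_0 \sim \boldsymbol{\mathsf{N}}(\mu(b_0), \sigma^2(b_0))$ to obtain
\begin{equation}
\bar{\mathrm{P}} \sim \boldsymbol{\mathsf{N}}\!\left(c\,\mu(b_1) + \mu(b_0),\; c^2 \sigma^2(b_1) + \sigma^2(b_0)\right).
\end{equation}
Substituting $c^2 = m^3/\rho$ and $c = \sqrt{m^3/\rho}$ yields exactly the claimed expressions for $\mu_{\bar{\mathrm{P}}}$ and $\sigma^2_{\bar{\mathrm{P}}}$ (reading the $\bar{\mathrm{P}}^*$ subscript on the variance as a typographical slip for $\bar{\mathrm{P}}$).

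The only nontrivial step is justifying the two underlying assumptions: (i) that $b_1$ and $b_0$ are approximately normal, and (ii) that they are independent. Assumption (i) follows from the asymptotic normality of ordinary least-squares estimators (and is consistent with reporting bootstrap standard errors in Table~\ref{tab:reg-model-coef}); I would state this explicitly and cite \cite{rodrigues2022drone}. Assumption (ii) is the main modeling simplification --- in practice the slope and intercept of a linear regression are correlated, but independence is a common and practically harmless approximation in our setting. I would note this caveat briefly and then invoke the two classical closure properties above; no further calculation is required.
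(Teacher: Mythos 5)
Your proposal is correct and takes essentially the same route as the paper's own proof: both treat $b_1$ and $b_0$ as independent normals and invoke closure of the normal family under linear combinations to read off the stated mean and variance. Your version is simply more explicit (and your caveat about the slope--intercept correlation being a modeling simplification is a fair point the paper glosses over), but there is no substantive difference in approach.
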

\begin{proof}
    Given $b_1$ and $b_0$ follow two independent normal distributions as indicated in Table~\ref{tab:reg-model-coef}, the estimated average power is the linear combination of their independent and identically distributed (i.i.d.) samples. This is mathematically equivalent to stating that the estimated average power follows a normal distribution with mean $\mu_{\mathrm{\bar{P}}}=\mu(b_1) \sqrt{\frac{m^3}{\rho}} + \mu(b_0)$ and variance $\sigma^2_{\mathrm{\bar{P}}^*} = \sigma^2(b_1) \frac{m^3}{\rho} + \sigma^2(b_0)$.
\end{proof}

\section{Inherited ant colony system} \label{appdix:iacs}
\begin{algorithm}[!b]
    \caption{Drop operator} \label{alg:drop-operator}
    \SetKwInOut{Input}{Input}%
    \Input{Path of ant $m$;\: Feasible node set $\mathbf{A}_m^{\mathbf{V}}$.}
    \While{ant path does \textbf{not} satisfy Constraint \eqref{con:sdop-budget}}{
        \For{each node $v_k$ at path index $l$ (exclude start and end node)}{
            Compute drop value $drop(v_k \:|\: l)$ by Eq. \eqref{eq:drop-value}\;
        }
        Find the path index $j$ at which the node $v_i$ has the minimum drop value, i.e., $i, j = \arg\min_{k, l} \big\{ drop (v_k \:|\: l), ... \big\}$\;
        Update path cost $\mathcal{C}_m \leftarrow \mathcal{C}_m - \mathcal{C}_{\text{drop}}(v_i\:|\: j)$\;
        Update path prize $\mathcal{P}_m \leftarrow \mathcal{P}_m - \mathcal{P}(v_{i})$\;
        Remove the node at path index $i$\;
        Update the feasible set $\mathbf{A}_m^{\mathbf{V}} \leftarrow \mathbf{A}_m^{\mathbf{V}} \cup \big\{ v_{i} \big\}$\;
    }
    \Return The feasible path with the new prize, new cost, and updated feasible node set.
\end{algorithm}

\begin{algorithm}[!b]
    \caption{Add operator} \label{alg:add-operator}
    \SetKwInOut{Input}{Input}%
    \Input{Ant path;\: Feasible node set $\mathbf{A}_m^{\mathbf{V}}$.} 
    \While{exist any node $\in \mathbf{A}_m^{\mathbf{V}}$ can be inserted into ant path without violating Constraint \eqref{con:sdop-budget}}{
        \For{each node $v_k \in \mathbf{A}_m^{\mathbf{V}}$}{
            Get 3 neighbor nodes in the ant path with minimum distance cost to visit $\mathbf{S}_{\text{nbr}} =\big\{ nbr_1,\: nbr_2,\: nbr_3 \big\}$\;
            \For{each pair of neighbor nodes $(nbr_i,\: nbr_j) \in \mathbf{S}_{\mathrm{nbr}}$}{
                Check if this pair is adjacent in the path\;
            }
            \If{\textbf{no} adjacent pair exists}{
                \For{$nbr_i \in \mathbf{S}_{\mathrm{nbr}}$}{
                    Find the previous and next node of $nbr_i$ in the path\;
                    Create new pair $(v_{prev},\: nbr_i)$ and $(nbr_i,\: v_{next})$\;
                }
            }
            Find the pair that minimizes the visitation cost and find at which index $l$ to insert\;
            Compute add value $add(v_k \:|\: l)$ by Eq. \eqref{eq:add-value}\;
        }
        Find the node $v_i$ with maximum add value and its insert index $j$ in the path, i.e., $i, j = \arg\max_{k, l} \big\{ add( v_k \:|\: l), ... \big\}$\;
        Update path cost $\mathcal{C}_m \leftarrow \mathcal{C}_m + \mathcal{C}_{\text{add}}(v_i \:|\: j)$\;
        Update path prize $\mathcal{P}_m \leftarrow \mathcal{P}_m + \mathcal{P}(v_i)$\;
        Insert the node $v_i$ into the ant path (at index $j$)\;
        Update the feasible set $\mathbf{A}_m^{\mathbf{V}} \leftarrow \mathbf{A}_m^{\mathbf{V}} \setminus \big\{ v_i \big\}$\;
    }
    \Return The feasible path with the new prize, new cost, and updated feasible node set.
\end{algorithm}

\begin{algorithm}[!b]
    \caption{Inherited Ant Colony System} \label{alg:iacs}
    \SetKwInOut{Input}{Input}%
    \Input{Node set $\mathbf{V}$; Number of ants $N_{\mathrm{ant}}$; Number of iterations $N_{\mathrm{it}}$; Maximum number of no improvement $N_{\mathrm{impr}}$; Improvement tolerance $\epsilon_{\mathrm{ACS}}$; $\beta$ in Eq. \eqref{eq:acs-random-prob}; $q_0$ in Eq. \eqref{eq:acs-state}; $\rho$ in Eq. \eqref{eq:acs-local}; $\alpha$ in Eq. \eqref{eq:acs-global}; Previous global-best path.}
    \If{online planning}{
        Apply Add operator (Alg. \ref{alg:add-operator}) and Drop operator (Alg. \ref{alg:drop-operator}) to the previous global-best path\;
        Update pheromone matrix with $\tau_0 \leftarrow \mathcal{P}_{\mathrm{gb}} / \mathcal{C}_{\mathrm{gb}}$\;
    }
    \Else{
        Update pheromone matrix with $\tau_0$ obtained by nearest neighbor heuristic\;
    }
    Initialize $N_{\mathrm{ant}}$ ants and associated feasible node set $\mathbf{A}^{\mathbf{V}}$, and set no improvement counter to $0$\;
    \For{$n_{\mathrm{it}}=1$ to $N_{\mathrm{it}}$}{
        \If{no improvement counter $\geq N_{\mathrm{impr}}$}{ Break the loop\; }
        \For{each ant $m$}{
            Randomly sample the first node $v \in \mathbf{A}_m^{\mathbf{V}}$ and add it to the ant path\;
            \While{$\mathbf{A}_m^{\mathbf{V}} \neq \varnothing$ \textbf{and} ant path satisfies Constraint \eqref{con:sdop-budget}}{
                Select the next node by Eq. \eqref{eq:acs-state} and add it to the ant path\;
                Update the prize, cost, and feasibility of the path\;
            }
            Add the end depot node, then update path cost and feasibility\;
            Apply 2-opt operator, then update path cost and feasibility\;
            \If{ant path \textbf{not} feasible}{
                Invoke the drop operator (Alg. \ref{alg:drop-operator})\;
            }
            Invoke the add operator (Alg. \ref{alg:add-operator})\;
            Update the pheromone matrix by Eq. \eqref{eq:acs-local}\;
        }
        Update the local-best ant with index equals to $\arg\max_k \big\{ \mathcal{P}_k \big\}$ (or $\arg\min_k \big\{ \mathcal{C}_k\big\}$ if $\mathcal{P}$ is maximum)\;
        \If{\big( $\mathcal{P}_{\mathrm{lb}} \geq \mathcal{P}_{\mathrm{gb}} + \epsilon_{\mathrm{ACS}}$ \big) \textbf{or} \big($\mathcal{P}_{\mathrm{lb}} = \mathcal{P}_{\mathrm{gb}}$ \textbf{and } $\mathcal{C}_{\mathrm{lb}} \leq \mathcal{C}_{\mathrm{gb}} - \epsilon_{\mathrm{ACS}}$ \big)}{
            Update the global-best ant\;
            Update the pheromone matrix by Eq. \eqref{eq:acs-global}\;
        }
        \Else{ No improvement counter $+1$\; }
        Reset all ants (except the global-best ant)\;
    }
    \Return The path sequence, path cost, and path prize of the global-best ant.
\end{algorithm}
\vspace*{-0.2cm}

The solver of online planner is adapted from the IACS in \cite{qian2024ceopn}. Compared to the classic ACS, IACS (see \autoref{alg:iacs}) utilizes the path from previous computation to initialize the pheromone matrix. Because ACS is initially designed for an unconstrained optimization scenario (i.e., Traveling Salesman Problem), we employ drop operator (see \autoref{alg:drop-operator}) and add operator (see \autoref{alg:add-operator}) to confine the budget constraint and maximize budget utilization. The drop cost of a node $v_i$ denotes the sum of visit cost and service cost (if the node to be dropped is at path index $j$). The drop value of $v_i$ is simply defined as its prize divided by its drop cost:
\begin{subequations}
\begin{flalign} \label{eq:drop-value}
    \nonumber
    &\mathcal{C}_{\text{drop}}(v_i \:|\: j) = - \mathcal{C}_{f_2}(v_{j-1},\: v_{j+1}) & \\
    \nonumber
    & \qquad\qquad + \mathcal{C}_{f_2}(v_{j-1},\: v_i) + \mathcal{C}_{f_2}(v_i,\: v_{j+1}) + \mathcal{C}_{f_3} (v_i)& \\
    & drop(v_i\:|\: j) = \mathcal{P}(v_i) \:/\: \mathcal{C}_{\text{drop}}(v_i \:|\: j) &
\end{flalign}
\end{subequations}
Similarly, a node $v_i$'s add value (if inserted at path index $j$) has the form as drop value:
\begin{subequations}
\begin{flalign} \label{eq:add-value}
    \nonumber
    &\mathcal{C}_{\text{add}}(v_i \:|\: j) = - \mathcal{C}_{f_2}(v_{j-1},\: v_{j+1}) & \\
    \nonumber
    & \qquad\qquad + \mathcal{C}_{f_2}(v_{j-1},\: v_i) + \mathcal{C}_{f_2}(v_i,\: v_{j+1}) + \mathcal{C}_{f_3} (v_i)& \\
    & add(v_i\:|\: j) = \mathcal{P}(v_i) \:/\: \mathcal{C}_{\text{add}}(v_i \:|\: j) &
\end{flalign}
\end{subequations}

ACS simulates the foraging behavior of an ant colony, incorporating three fundamental rules: the state transition rule, which decides the next visitation; the local updating rule, responsible for adjusting the pheromone trail visited by all ants; and the global updating rule, which updates the pheromone matrix based on the global-best ant. In our state transition rule, the probability for the ant $m$ at the node $v_r$ to visit the next node $v_s$ is defined as:
\begin{equation} \label{eq:acs-random-prob}
    \begin{aligned}
        &p_m(r, s) = &\\
        &\begin{dcases}
            \dfrac{\big[ \tau(r,\: s) \big] \cdot \big[ \eta(r,\: s) \big]^{\beta}}{\sum_{v_u \in\: \mathbf{A}_m^{\mathbf{V}}}\: \big[ \tau(r,\: u) \big] \cdot \big[ \eta(r,\: u) \big]^{\beta}}\;, & \text{if} ~ v_s \in \mathbf{A}_m^{\mathbf{V}}\\
            0\;, & \text{otherwise}
        \end{dcases}&
    \end{aligned}
\end{equation}
where $\tau(r, s)$ is the pheromone deposited on edge $e_{rs}$. We define the heuristic information $\eta(r, s) = \dfrac{\mathcal{P}(v_s)}{ \mathcal{C}_{f_2}(v_r,\: v_s) + \mathcal{C}_{d_3}(v_s)}$ as the ratio of the node $v_s$'s prize to the sum of edge cost between these two nodes and prize-collection cost. $\beta$ is a parameter to control the relative importance of pheromone versus heuristic information. We denote the feasible set of remaining nodes in the ant $m$ by $\mathbf{A}_m^{\mathbf{V}}$. To balance exploring and exploiting, the state transition rule introduces an additional parameter $q_0 \in \big( 0, 1 \big)$: 
\begin{equation} \label{eq:acs-state}
    s = \begin{dcases}
        \arg\max_{v_s \in\: \mathbf{A}_m^{\mathbf{V}}} \Big\{ \big[ \tau (r,\: s) \big] \cdot \big[ \eta(r, \: s) \big]^{\beta} \Big\}\; , & q \leq q_0\\
        s \sim p_k(r,\: s) ~\text{in Eq.~\eqref{eq:acs-random-prob}} \; , & q > q_0
    \end{dcases}
\end{equation}
The probability $q \in \mathbb R$ is randomly generated from a uniform distribution ranging in $[0, 1]$. Moreover, to reduce the probability of ants constructing the same solution, the local updating rule is applied to edges visited by ants after the solution construction phase:
\begin{eqnarray}\label{eq:acs-local}
    \tau(r, s) \leftarrow (1-\rho)\cdot \tau(r, s) + \rho \cdot \tau_0(r, s)
\end{eqnarray}
The evaporation rate $\rho \in (0, 1)$ is a constant that limits the accumulated pheromone on edge $e_{rs}$. In ACS, only the global-best ant, whose solution achieves the highest quality so far (i.e., either maximum prizes or minimum costs when prizes are the same), can deposit the pheromone at the end of each iteration. The global updating rule is defined as: 
\begin{eqnarray}\label{eq:acs-global}
    \tau(r,\: s) \leftarrow (1-\alpha)\cdot \tau(r,\: s) + \alpha \cdot \Delta \tau(r,\: s)
\end{eqnarray}
where $\alpha \in (0, 1)$ is a constant to control the pheromone decay rate, the deposited pheromone can be obtained by:
\begin{eqnarray} \label{eq:acs-delta-tau-ceop}
    \Delta\tau(r, s) = \begin{dcases} \mathcal{P}_{\text{gb}} \: / \: \mathcal{C}_{\text{gb}} \; , & \text{if $e_{rs} \in $ global-best path}\\
        0 \; , & \text{otherwise}
    \end{dcases}
\end{eqnarray}
$\mathcal{P}_{\text{gb}}$ and $\mathcal{C}_{\text{gb}}$ are the collected prize and cost of the global-best path, respectively. We opted for a straightforward 2-opt local search method for later path sequence improvement.

\begin{table*}[t]
\centering
\begin{threeparttable}
\caption{Offline solution performance comparison.} \label{tab:offline-path0}
\setlength\tabcolsep{0pt}
\begin{tabular*}{\textwidth}{@{\extracolsep{\fill}} cccccccc}
\toprule
\multirow{2}{*}{Instance} & \multirow{2}{*}{Alg.} & \multicolumn{2}{l}{Execution time (s)} & \multicolumn{2}{l}{Offline prize (kJ)} & \multicolumn{2}{l}{Offline cost (kJ)} \\ \cline{3-4} \cline{5-6} \cline{7-8} 
 &  & mean & SD & mean & SD & mean & SD \\
 \midrule
\multirow{2}{*}{\small\textit{California20}} & ACS & \textbf{4.381} & 0.283 & 47.051 & 0.000 & 358.196 & 0.366 \\
 & BnB & 12.191 & 1.547 & \textbf{47.052} & 0.004 & 358.875 & 0.495 \\
\midrule
\multirow{2}{*}{\small\textit{California30}} & ACS & \textbf{6.908} & 0.316 & \textbf{41.014} & 0.142 & 359.223 & 0.294 \\
& BnB & 301.118 & 0.040 & 40.965 & 0.196 & 359.225 & 0.477 \\
\midrule
\multirow{2}{*}{\small\textit{California40}} & ACS & \textbf{9.572} & 0.364 & 51.050 & 0.000 & 358.148 & 0.391 \\
& BnB & 226.609 & 53.026 & \textbf{51.055} & 0.012 & 358.920 & 0.577 \\
\bottomrule
\end{tabular*}
\end{threeparttable}
\end{table*}

\begin{table*}[t]
\centering
\begin{threeparttable}
\caption{Performance of solutions with different offline paths.} \label{tab:offline-final}
\setlength\tabcolsep{0pt}
\begin{tabular*}{\textwidth}{@{\extracolsep{\fill}} ccccccccccccc}
\toprule
\multirow{3}{*}{Instance} & \multirow{2}{*}{$\Delta\mu_{\bar{\mathbf{P}}^*}$} & \multirow{2}{*}{$\Delta\mu_{\bar{\mathbf{P}}^*}$} & \multicolumn{5}{c}{ADAPT with limited ACS offline path} & \multicolumn{5}{c}{ADAPT with BnB offline path} \\
\cline{4-8}\cline{9-13}
 &  &  & \multicolumn{1}{c}{MSR} & \multicolumn{2}{c}{$\mathcal{P}^*$(kJ)} & \multicolumn{2}{c}{$\mathcal{C}^*$(kJ)} & \multicolumn{1}{c}{MSR} & \multicolumn{2}{c}{$\mathcal{P}^*$(kJ)} & \multicolumn{2}{c}{$\mathcal{C}^*$(kJ)} \\
 \cline{5-6}\cline{7-8}\cline{10-11}\cline{12-13}
 & (\%) & (\%) & (\%) & mean & SD & mean & SD & (\%) & mean & SD & mean & SD \\
 \midrule
\multirow{4}{*}{\small\textit{California20}} & -10 & 0 & 100 & \textbf{50.542} & 0.941 & 351.345 & 3.482 & 100 & 50.395 & 0.620 & 350.891 & 2.832 \\
 & 0 & 0 & 100 & 47.435 & 0.600 & 351.807 & 3.724 & 100 & \textbf{47.627} & 0.538 & 349.735 & 4.223 \\
 & 10 & 0 & 100 & 44.425 & 0.906 & 352.015 & 3.979 & 100 & \textbf{44.455} & 0.654 & 349.989 & 2.556 \\
 & 20 & 0 & 100 & 41.196 & 0.887 & 349.896 & 5.178 & 100 & \textbf{41.296} & 0.757 & 350.419 & 5.523 \\
  \midrule
\multirow{4}{*}{\small\textit{California30}} & -10 & 0 & 100 & 45.324 & 0.977 & 352.144 & 2.684 & 100 & \textbf{45.636} & 0.375 & 349.538 & 3.273 \\
 & 0 & 0 & 100 & 41.393 & 0.713 & 352.542 & 4.129 & 100 & \textbf{41.570} & 0.656 & 351.489 & 3.682 \\
 & 10 & 0 & 100 & \textbf{37.736} & 1.012 & 353.390 & 3.461 & 100 & 37.434 & 0.884 & 353.613 & 2.997 \\
 & 20 & 0 & 100 & \textbf{34.381} & 1.434 & 348.555 & 4.315 & 100 & 33.876 & 1.642 & 345.609 & 5.194 \\
 \midrule
 \multirow{4}{*}{\small\textit{California40}} & -10 & 0 & 100 & 54.078 & 0.929 & 344.561 & 3.960 & 100 & \textbf{54.277} & 0.849 & 344.450 & 4.541 \\
 & 0 & 0 & 100 & 51.580 & 0.852 & 350.355 & 4.304 & 100 & \textbf{51.841} & 0.529 & 348.455 & 3.387 \\
 & 10 & 0 & 100 & 48.520 & 1.614 & 351.968 & 4.552 & 100 & \textbf{48.942} & 0.946 & 353.909 & 2.537 \\
 & 20 & 0 & 100 & 45.750 & 1.651 & 351.127 & 4.430 & 100 & \textbf{46.144} & 1.165 & 351.283 & 3.859 \\
 \bottomrule
\end{tabular*}
\end{threeparttable}
\end{table*}

\section{Offline planning with ACS} \label{appdix:offline}
We assess the Ant Colony System (ACS) algorithm's efficacy in solving \textit{California} instances, comparing it to an exact method (i.e., the Branch and Bound algorithm, BnB) implemented in Gurobi \cite{gurobi}. To avoid unnecessary computation, we impose a 5-minute execution time limit and a 0.01 minimum improvement tolerance. Our experiments focus on $\Delta\mu_{\bar{\mathbf{P}}}\in\{-10, 0, 10, 20\}\%$ and $\Delta\sigma_{\bar{\mathbf{P}}}=0\%$ for all \textit{California} instances. Table~\ref{tab:offline-path0} presents averaged results from 200 individual executions (50 per $\Delta\mu_{\bar{\mathbf{P}}}$ value). The data indicate that ACS achieves solutions comparable to BnB's while significantly reducing computation time. Notably, for the \textit{California30} instance, ACS outperforms BnB, likely due to the 5-minute time constraint being insufficient for Gurobi to identify a high-quality solution. While extended execution time might enable Gurobi to determine the optimal solution, such prolonged computation is impractical during mission execution.

Given the similar performance of ACS's and Gurobi's solutions, we validate the robustness of ADAPT by adjusting ACS parameters stated in Section \ref{sec:param-setting} to $N_{\text{Ant}} = 4$ $N_{\text{ACS}}=25$, $\epsilon_{\text{ACS}}=0.01$, and setting the number of no-improvement iterations to 5. This adjustment reduces ACS's performance in computing a lower-quality offline path. Table~\ref{tab:offline-final} compares the solution quality of ADAPT using offline paths computed by limited ACS and Gurobi over 50 executions. Our findings suggest that the offline path quality may have a weak effect on final solution quality. 

\section{Computation time} \label{appdix:computation-time}
 The theoretical worst-case computational complexity of ADAPT is $\mathcal{O}(N_{\text{SN}}^2)$, where $N_{\text{SN}}$ denotes the number of nodes because the 2-opt operator has a $\mathcal{O}(N_{\text{SN}}^2)$ complexity. However, as noted by \cite{qian2024ceopn}, the inheritance mechanism can advance ACS's convergence process in practice. Fig.~\ref{fig:comp-time} visualizes the typical computational time of four online approaches under a low and high problem complexity scenario\footnote[3]{All experiments were conducted on an Intel NUC11TNK with i7-11657G (2.8 GHz) CPU and 16 GB RAM.}. All approaches can complete computation within seconds, demonstrating ADAPT's potential for continuous real-time re-planning.
\begin{figure}[t]
    \hspace*{-0.3cm}
    \centering
    \includegraphics[width=0.48\textwidth]{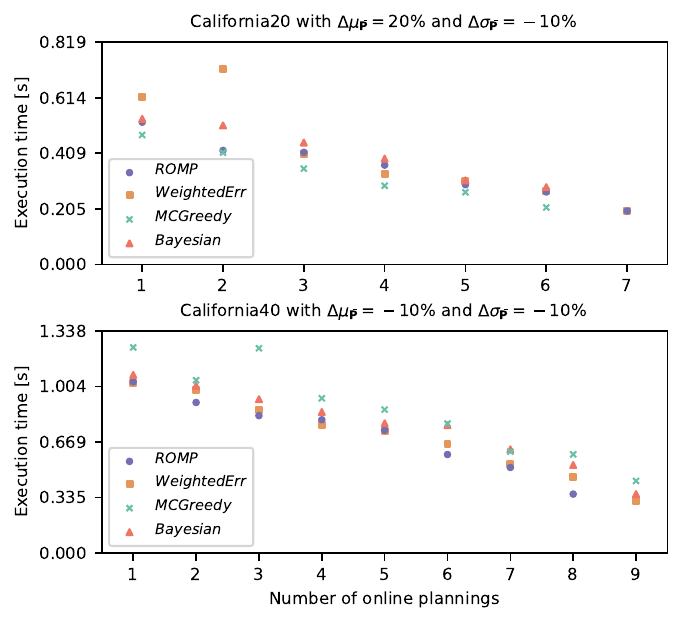}
    \caption{Two examples of algorithms' execution time. The top and bottom scenarios have the lowest and highest problem complexity among all tested scenarios.}
    \label{fig:comp-time}
\end{figure}

\section{Sensitivity analysis of the minimum safety belief} \label{appdix:theta-sensitivity}
Solutions generated by different $\Theta_{\min}$ under various scenarios are presented in Table~\ref{tab:sensitivity-belief}. In summary, all settings of $\Theta_{\min}$ can have a high mission success rate of over 50 executions under most scenarios. The setting of $\Theta\in[45, 99]\%$ can frequently find higher-quality solutions compared to others because it allows more search space. However, the balance between safety beliefs and prize collection is challenging to maintain, resulting in risky solutions that pursue high prize collection (see \textit{California20} with $\Delta\mu_{\bar{\mathbf{P}}} = 20\%$). In conclusion, the prize advancement achieved by setting low $\Theta_{\min}$ insufficiently compensates for mission safety within the CSP context. The adaptive setting of weights to the safety belief and prize collection may allow lower $\Theta_{\min}$ to achieve a higher mission success rate.

\begin{table*}[t]
\centering
\begin{adjustbox}{angle=90}
\begin{threeparttable}
\caption{Sensitivity analysis of safety belief $\Theta$.} \label{tab:sensitivity-belief}
\setlength\tabcolsep{0pt}
\begin{tabular*}{\textheight}{@{\extracolsep{\fill}} cccccccccccccccccc}
\toprule
 & $\Delta\mu_{\bar{\mathbf{P}}^*}$ & $\Delta\sigma_{\bar{\mathbf{P}}^*}$ & \multicolumn{3}{c}{$\Theta \in [45, 99]\%$} & \multicolumn{3}{c}{$\Theta \in [55, 99]\%$} & \multicolumn{3}{c}{$\Theta \in [65, 99]\%$} & \multicolumn{3}{c}{$\Theta \in [75, 99]\%$} & \multicolumn{3}{c}{$\Theta \in [85, 99]\%$} \\
 \cline{4-6} \cline{7-9} \cline{10-12} \cline{13-15} \cline{16-18}
 & (\%) & (\%) & \text{MSR}(\%) & $\mathcal{P}^*$(kJ) & $\mathcal{C}^*$(kJ) & \text{MSR}(\%) & $\mathcal{P}^*$(kJ) & $\mathcal{C}^*$(kJ) & \text{MSR}(\%) & $\mathcal{P}^*$(kJ) & $\mathcal{C}^*$(kJ) & \text{MSR}(\%) & $\mathcal{P}^*$(kJ) & $\mathcal{C}^*$(kJ) & \text{MSR}(\%) & $\mathcal{P}^*$(kJ) & $\mathcal{C}^*$(kJ) \\
 \midrule
\multirow{16}{*}{\rot{\normalsize\textit{California20}}} & -10 & -10 & 100 & \textbf{50.945} & 348.928 & 100 & 50.849 & 349.750 & 100 & 50.623 & 350.449 & 100 & 50.348 & 351.488 & 100 & 50.279 & 350.968 \\
 & -10 & 0 & 100 & 50.837 & 349.506 & 100 & \textbf{50.839} & 349.275 & 100 & 50.688 & 350.560 & 100 & 50.292 & 352.069 & 100 & 50.242 & 351.415 \\
 & -10 & 10 & 100 & \textbf{50.765} & 349.987 & 100 & 50.726 & 349.632 & 100 & 50.675 & 350.270 & 100 & 50.304 & 352.481 & 100 & 50.249 & 351.241 \\
 & -10 & 20 & 98 & \textbf{50.943} & 350.290 & 100 & 50.818 & 349.979 & 100 & 50.523 & 350.512 & 100 & 50.337 & 351.563 & 100 & 50.288 & 350.792 \\
 & 0 & -10 & 100 & 47.175 & 350.368 & 100 & 47.484 & 349.272 & 100 & \textbf{47.629} & 349.539 & 100 & 47.607 & 349.316 & 100 & 47.581 & 350.139 \\
 & 0 & 0 & 100 & 47.155 & 349.858 & 100 & 47.491 & 348.663 & 100 & \textbf{47.714} & 350.206 & 100 & 47.629 & 349.617 & 100 & 47.525 & 350.234 \\
 & 0 & 10 & 100 & 47.554 & 350.298 & 100 & \textbf{47.658} & 350.474 & 100 & 47.564 & 350.514 & 100 & 47.569 & 349.167 & 100 & 47.602 & 350.895 \\
 & 0 & 20 & 100 & 47.461 & 351.015 & 100 & 47.628 & 350.276 & 100 & 47.438 & 349.731 & 100 & \textbf{47.632} & 349.644 & 100 & 47.596 & 350.653 \\
 & 10 & -10 & 94 & \textbf{44.672} & 350.356 & 96 & 44.190 & 350.251 & 100 & 44.617 & 350.390 & 100 & 44.365 & 350.226 & 100 & 44.419 & 350.030 \\
 & 10 & 0 & 92 & \textbf{44.703} & 350.957 & 100 & 44.332 & 349.762 & 100 & 44.442 & 350.333 & 100 & 44.357 & 349.983 & 100 & 44.280 & 349.822 \\
 & 10 & 10 & 92 & \textbf{44.668} & 351.189 & 100 & 44.536 & 351.285 & 100 & 44.392 & 350.831 & 100 & 44.430 & 350.014 & 100 & 44.311 & 349.561 \\
 & 10 & 20 & 86 & \textbf{44.632} & 350.602 & 98 & 44.499 & 351.064 & 96 & 44.342 & 350.126 & 100 & 44.466 & 350.492 & 100 & 44.487 & 350.156 \\
 & 20 & -10 & 32 & 40.985 & 347.310 & 68 & 40.524 & 341.236 & 94 & 40.320 & 342.787 & 100 & 41.135 & 351.537 & 100 & \textbf{41.168} & 350.775 \\
 & 20 & 0 & 48 & 40.639 & 344.362 & 74 & 40.738 & 343.225 & 90 & 40.776 & 344.835 & 100 & \textbf{41.104} & 349.919 & 100 & 41.072 & 350.337 \\
 & 20 & 10 & 36 & 40.552 & 346.397 & 88 & 40.132 & 341.628 & 92 & 40.555 & 344.342 & 100 & 41.074 & 348.259 & 100 & \textbf{41.080} & 350.769 \\
 & 20 & 20 & 57 & 41.076 & 348.026 & 90 & 40.235 & 342.960 & 92 & 40.846 & 344.426 & 100 & \textbf{41.103} & 350.431 & 100 & 41.080 & 350.004 \\
 \midrule
\multirow{16}{*}{\rot{\normalsize\textit{California30}}} & -10 & -10 & 100 & 45.52 & 350.69 & 100 & \textbf{45.67} & 350.70 & 100 & 45.62 & 350.22 & 100 & 45.62 & 350.44 & 100 & 45.46 & 351.03 \\
 & -10 & 0 & 100 & 45.46 & 349.63 & 100 & \textbf{45.65} & 350.04 & 100 & 45.58 & 350.39 & 100 & 45.63 & 349.66 & 100 & 45.48 & 349.69 \\
 & -10 & 10 & 100 & \textbf{45.68} & 350.64 & 100 & 45.64 & 350.56 & 100 & 45.64 & 349.78 & 100 & 45.52 & 349.76 & 100 & 45.50 & 350.47 \\
 & -10 & 20 & 100 & \textbf{45.78} & 350.76 & 100 & 45.78 & 351.01 & 100 & 45.53 & 349.73 & 100 & 45.62 & 350.16 & 100 & 45.51 & 350.08 \\
 & 0 & -10 & 96 & 41.78 & 351.97 & 100 & 41.76 & 352.37 & 100 & \textbf{42.00} & 352.70 & 100 & 41.50 & 350.64 & 100 & 41.65 & 351.64 \\
 & 0 & 0 & 98 & 41.67 & 351.17 & 100 & 41.69 & 351.58 & 100 & \textbf{41.90} & 352.67 & 100 & 41.80 & 352.25 & 100 & 41.56 & 350.78 \\
 & 0 & 10 & 100 & \textbf{41.72} & 351.27 & 98 & 41.66 & 352.31 & 100 & 41.55 & 350.73 & 100 & 41.57 & 351.38 & 100 & 41.54 & 350.60 \\
 & 0 & 20 & 100 & \textbf{41.65} & 351.73 & 100 & 41.55 & 351.15 & 100 & 41.49 & 350.45 & 100 & 41.57 & 351.25 & 100 & 41.48 & 350.48 \\
 & 10 & -10 & 98 & 38.31 & 354.59 & 100 & \textbf{38.50} & 355.19 & 100 & 38.26 & 354.21 & 100 & 37.66 & 353.27 & 100 & 37.39 & 352.62 \\
 & 10 & 0 & 96 & 38.39 & 355.26 & 100 & \textbf{38.52} & 354.78 & 100 & 38.05 & 354.22 & 100 & 37.61 & 353.48 & 100 & 37.63 & 352.73 \\
 & 10 & 10 & 100 & \textbf{38.48} & 355.07 & 100 & 38.37 & 355.02 & 100 & 38.31 & 354.99 & 100 & 37.66 & 353.25 & 100 & 37.53 & 353.05 \\
 & 10 & 20 & 100 & \textbf{38.44} & 355.31 & 100 & 38.12 & 354.68 & 100 & 38.19 & 354.88 & 100 & 38.02 & 354.00 & 100 & 37.18 & 351.89 \\
 & 20 & -10 & 96 & \textbf{34.64} & 352.66 & 94 & 34.56 & 351.70 & 100 & 34.42 & 349.67 & 100 & 34.07 & 346.50 & 100 & 33.69 & 345.08 \\
 & 20 & 0 & 100 & \textbf{34.73} & 352.76 & 100 & 34.47 & 352.58 & 96 & 34.41 & 348.69 & 100 & 34.14 & 346.53 & 100 & 33.96 & 346.85 \\
 & 20 & 10 & 100 & \textbf{34.79} & 352.44 & 96 & 34.30 & 351.12 & 96 & 34.27 & 350.31 & 100 & 33.85 & 345.24 & 100 & 34.04 & 346.40 \\
 & 20 & 20 & 98 & 34.53 & 352.74 & 92 & 34.41 & 351.61 & 100 & 34.41 & 349.92 & 100 & 33.68 & 346.71 & 100 & \textbf{34.53} & 346.61 \\
 \midrule
\multirow{16}{*}{\rot{\normalsize\textit{California40}}} & -10 & -10 & 100 & 53.89 & 344.31 & 100 & 53.88 & 342.00 & 100 & 53.89 & 342.04 & 100 & \textbf{54.28} & 343.97 & 100 & 53.89 & 341.97 \\
 & -10 & 0 & 100 & 53.83 & 345.23 & 100 & 53.91 & 342.04 & 100 & 53.86 & 341.94 & 100 & \textbf{54.14} & 343.18 & 100 & 53.93 & 342.06 \\
 & -10 & 10 & 100 & 53.93 & 343.50 & 100 & 53.89 & 342.24 & 100 & 54.00 & 342.68 & 100 & \textbf{54.09} & 342.94 & 100 & 53.92 & 342.07 \\
 & -10 & 20 & 100 & 53.83 & 343.90 & 100 & 53.84 & 341.69 & 100 & 53.91 & 342.02 & 100 & \textbf{54.19} & 343.27 & 100 & 53.85 & 341.92 \\
 & 0 & -10 & 100 & 51.72 & 353.04 & 100 & \textbf{51.77} & 352.32 & 100 & 51.68 & 352.47 & 100 & 51.60 & 349.82 & 100 & 51.61 & 346.46 \\
 & 0 & 0 & 100 & 51.74 & 351.80 & 100 & \textbf{51.77} & 352.44 & 100 & 51.69 & 352.80 & 100 & 51.71 & 351.09 & 100 & 51.58 & 346.06 \\
 & 0 & 10 & 100 & 51.57 & 353.03 & 100 & \textbf{51.77} & 351.51 & 100 & 51.76 & 352.17 & 100 & 51.76 & 349.45 & 100 & 51.59 & 347.14 \\
 & 0 & 20 & 100 & 51.76 & 353.10 & 100 & 51.60 & 353.36 & 100 & 51.66 & 351.42 & 100 & \textbf{51.78} & 351.29 & 100 & 51.58 & 346.51 \\
 & 10 & -10 & 100 & 49.72 & 355.49 & 100 & 49.69 & 355.45 & 100 & \textbf{49.72} & 355.49 & 100 & 49.19 & 354.08 & 100 & 48.84 & 352.95 \\
 & 10 & 0 & 100 & \textbf{49.72} & 355.44 & 100 & 49.69 & 355.44 & 100 & 49.72 & 355.56 & 100 & 49.17 & 354.04 & 100 & 48.50 & 351.94 \\
 & 10 & 10 & 100 & \textbf{49.72} & 355.46 & 100 & 49.72 & 355.44 & 100 & 49.72 & 355.45 & 100 & 49.15 & 353.81 & 100 & 48.73 & 352.68 \\
 & 10 & 20 & 100 & \textbf{49.72} & 355.51 & 100 & 49.72 & 355.56 & 100 & 49.72 & 355.47 & 100 & 49.18 & 353.89 & 100 & 48.73 & 352.65 \\
 & 20 & -10 & 100 & \textbf{47.23} & 353.80 & 100 & 47.13 & 353.59 & 100 & 47.04 & 353.31 & 100 & 46.76 & 352.60 & 100 & 45.43 & 349.14 \\
 & 20 & 0 & 100 & \textbf{47.23} & 353.81 & 100 & 47.13 & 353.63 & 100 & 47.10 & 353.48 & 100 & 46.91 & 353.00 & 100 & 45.30 & 348.62 \\
 & 20 & 10 & 100 & \textbf{47.23} & 353.76 & 100 & 47.11 & 353.56 & 100 & 47.15 & 353.61 & 100 & 46.75 & 352.63 & 100 & 45.15 & 348.42 \\
 & 20 & 20 & 100 & \textbf{47.23} & 353.71 & 100 & 47.16 & 353.73 & 100 & 47.13 & 353.55 & 100 & 46.70 & 352.39 & 100 & 45.43 & 349.12 \\
 \bottomrule
\end{tabular*}
\end{threeparttable}
\end{adjustbox}
\end{table*}

\section{Full result of all tests} \label{appdix:full-result}
Table~\ref{tab:sol-quality-full} presents full results of \textit{Offline}, \textit{ROMP}, \textit{WeightedErr}, \textit{MCGreedy}, and ADAPT for solving \textit{California20}, \textit{California30} and \textit{California40} with $\Delta\mu_{\bar{\mathbf{P}}}, \Delta\sigma_{\bar{\mathbf{P}}}\in\{-10, 0, 10, 20\}\%$. These results demonstrate the average prizes and costs of successful paths over 50 executions.  
\begin{table*}[b]
\centering
\begin{threeparttable}
\caption{Solution quality comparison.} \label{tab:sol-quality-full}
\setlength\tabcolsep{0pt}
\begin{tabular*}{\textwidth}{@{\extracolsep{\fill}} ccccccccccccc}
\toprule
& $\Delta\mu_{\bar{\mathbf{P}}^*}$ & $\Delta\sigma_{\bar{\mathbf{P}}^*}$ & \multicolumn{2}{c}{\textit{Offline}} & \multicolumn{2}{c}{\textit{ROMP}} & \multicolumn{2}{c}{\textit{WeightedErr}} & \multicolumn{2}{c}{\textit{MCGreedy}} & \multicolumn{2}{c}{\textit{Bayesian}} \\ 
\cline{4-5} \cline{6-7} \cline{8-9} \cline{10-11} \cline{12-13}
& (\%) & (\%) & $\mathcal{P}^*$(kJ) & $\mathcal{C}^*$(kJ) & $\mathcal{P}^*$(kJ) & $\mathcal{C}^*$(kJ) & $\mathcal{P}^*$(kJ) & $\mathcal{C}^*$(kJ) & $\mathcal{P}^*$(kJ) & $\mathcal{C}^*$(kJ) & $\mathcal{P}^*$(kJ) & $\mathcal{C}^*$(kJ) \\ \midrule
\multirow{16}{*}{\normalsize\rot{\textit{California20}}} & -10 & -10 & 47.073 & 322.476 & 48.706 & 341.331 & \textbf{51.033} & 348.916 & 49.662 & 351.174 & 50.484 & 351.182 \\
 & -10 & 0 & 47.073 & 322.438 & 48.766 & 342.540 & \textbf{50.793} & 348.868 & 50.085 & 351.163 & 50.273 & 352.389 \\
 & -10 & 10 & 47.073 & 322.524 & 48.747 & 341.920 & \textbf{50.993} & 348.541 & 49.427 & 351.138 & 50.422 & 351.075 \\
 & -10 & 20 & 47.073 & 322.435 & 48.771 & 343.212 & \textbf{50.690} & 348.179 & 49.577 & 351.010 & 50.293 & 352.291 \\
 & 0 & -10 & 47.073 & 345.153 & 47.275 & 346.929 & \textbf{47.819} & 351.914 & 46.498 & 350.979 & 47.534 & 348.960 \\
 & 0 & 0 & 47.073 & 345.286 & 47.380 & 348.242 & \textbf{47.949} & 352.862 & 46.139 & 351.477 & 47.619 & 349.715 \\
 & 0 & 10 & 47.073 & 345.042 & 47.440 & 348.208 & \textbf{47.952} & 352.322 & 46.411 & 351.727 & 47.607 & 349.237 \\
 & 0 & 20 & 47.073 & 345.220 & 47.312 & 347.372 & \textbf{47.863} & 351.994 & 45.500 & 351.133 & 47.608 & 349.366 \\
 & 10 & -10 & N/A & N/A & \textbf{45.454} & 357.707 & 44.555 & 355.272 & 43.206 & 351.034 & 44.570 & 349.935 \\
 & 10 & 0 & N/A & N/A & \textbf{46.219} & 359.274 & 44.483 & 355.431 & 43.150 & 351.206 & 44.474 & 350.054 \\
 & 10 & 10 & N/A & N/A & \textbf{45.836} & 358.480 & 43.957 & 354.046 & 42.535 & 350.425 & 44.400 & 349.981 \\
 & 10 & 20 & N/A & N/A & \textbf{45.709} & 357.445 & 45.017 & 356.905 & 42.802 & 350.408 & 44.724 & 351.774 \\
 & 20 & -10 & N/A & N/A & 39.570 & 357.186 & \textbf{42.049} & 357.937 & 40.932 & 351.680 & 40.977 & 349.407 \\
 & 20 & 0 & N/A & N/A & 39.796 & 357.707 & \textbf{42.247} & 357.965 & 40.971 & 351.914 & 41.072 & 349.949 \\
 & 20 & 10 & N/A & N/A & 39.557 & 358.141 & \textbf{42.061} & 358.341 & 40.896 & 352.045 & 41.009 & 349.785 \\
 & 20 & 20 & N/A & N/A & 39.525 & 357.212 & \textbf{42.044} & 358.282 & 40.902 & 351.093 & 41.073 & 349.764 \\
\midrule
\multirow{16}{*}{\normalsize\rot{\textit{California30}}} & -10 & -10 & 41.075 & 323.247 & 44.335 & 347.762 & 45.521 & 349.337 & \textbf{45.895} & 353.571 & 45.641 & 349.310 \\
 & -10 & 0 & 41.053 & 323.023 & 44.204 & 347.368 & 45.556 & 348.978 & \textbf{46.075} & 352.896 & 45.558 & 349.610 \\
 & -10 & 10 & 41.041 & 323.202 & 44.271 & 347.525 & 45.532 & 348.957 & 45.582 & 353.406 & \textbf{45.610} & 349.761 \\
 & -10 & 20 & 41.043 & 323.255 & 44.010 & 346.753 & 45.624 & 349.661 & 45.478 & 352.451 & \textbf{45.636} & 350.437 \\
 & 0 & -10 & 41.045 & 347.544 & 41.931 & 353.072 & \textbf{42.280} & 355.925 & 41.187 & 352.238 & 41.923 & 352.405 \\
 & 0 & 0 & 41.048 & 347.660 & 41.846 & 352.650 & \textbf{42.495} & 355.819 & 41.164 & 352.275 & 41.674 & 351.292 \\
 & 0 & 10 & 41.053 & 347.632 & 41.867 & 352.676 & \textbf{42.510} & 356.231 & 40.900 & 351.290 & 41.763 & 351.944 \\
 & 0 & 20 & 41.049 & 347.730 & 41.904 & 353.052 & \textbf{42.246} & 355.701 & 40.593 & 349.987 & 41.595 & 350.996 \\
 & 10 & -10 & N/A & N/A & 37.806 & 357.019 & \textbf{38.119} & 356.690 & 37.067 & 352.241 & 37.663 & 352.928 \\
 & 10 & 0 & N/A & N/A & 37.882 & 356.787 & \textbf{38.218} & 356.949 & 36.890 & 351.574 & 37.347 & 353.057 \\
 & 10 & 10 & N/A & N/A & 37.907 & 356.441 & \textbf{37.946} & 356.143 & 37.098 & 351.455 & 37.612 & 352.927 \\
 & 10 & 20 & N/A & N/A & \textbf{37.989} & 357.361 & 37.894 & 356.697 & 37.024 & 351.947 & 37.629 & 353.548 \\
 & 20 & -10 & N/A & N/A & 34.009 & 352.862 & \textbf{34.442} & 358.718 & 34.293 & 351.754 & 34.241 & 347.321 \\
 & 20 & 0 & N/A & N/A & 34.227 & 353.285 & \textbf{34.344} & 357.202 & 34.170 & 350.692 & 33.709 & 344.794 \\
 & 20 & 10 & N/A & N/A & 34.341 & 353.621 & \textbf{34.442} & 358.631 & 34.297 & 349.624 & 33.882 & 346.414 \\
 & 20 & 20 & N/A & N/A & 33.994 & 352.817 & \textbf{34.519} & 359.040 & 34.195 & 350.364 & 34.346 & 347.228 \\
\midrule
\multirow{16}{*}{\normalsize\rot{\textit{California40}}} & -10 & -10 & 51.076 & 321.772 & 52.366 & 330.673 & \textbf{54.419} & 345.336 & 54.228 & 349.600 & 54.255 & 343.772 \\
 & -10 & 0 & 51.076 & 321.747 & 52.366 & 330.855 & \textbf{54.360} & 345.074 & 54.294 & 348.465 & 54.215 & 343.539 \\
 & -10 & 10 & 51.076 & 321.802 & 52.367 & 330.789 & \textbf{54.287} & 345.141 & 53.754 & 347.906 & 54.113 & 343.101 \\
 & -10 & 20 & 51.076 & 321.734 & 52.396 & 331.243 & \textbf{54.363} & 345.175 & 53.207 & 347.779 & 54.192 & 343.457 \\
 & 0 & -10 & 51.076 & 343.318 & 51.263 & 347.121 & \textbf{52.333} & 351.466 & 50.190 & 347.960 & 51.696 & 351.336 \\
 & 0 & 0 & 51.076 & 343.387 & 51.262 & 347.091 & \textbf{52.333} & 351.485 & 50.599 & 348.009 & 51.540 & 349.952 \\
 & 0 & 10 & 51.076 & 343.493 & 51.265 & 347.157 & \textbf{52.333} & 351.437 & 50.776 & 348.726 & 51.617 & 350.529 \\
 & 0 & 20 & 51.076 & 343.410 & 51.283 & 347.128 & \textbf{52.319} & 351.064 & 50.514 & 347.348 & 51.755 & 350.744 \\
 & 10 & -10 & N/A & N/A & 45.127 & 338.869 & 48.515 & 354.141 & 46.850 & 345.281 & \textbf{49.042} & 353.574 \\
 & 10 & 0 & N/A & N/A & 44.573 & 335.979 & 48.697 & 353.888 & 47.020 & 345.358 & \textbf{49.347} & 354.428 \\
 & 10 & 10 & N/A & N/A & 44.537 & 337.141 & 48.359 & 354.259 & 46.905 & 344.185 & \textbf{49.183} & 353.898 \\
 & 10 & 20 & N/A & N/A & 45.111 & 337.387 & 48.302 & 354.085 & 46.732 & 344.422 & \textbf{49.124} & 353.721 \\
 & 20 & -10 & N/A & N/A & 42.503 & 337.158 & 42.629 & 339.227 & 44.998 & 351.714 & \textbf{46.779} & 352.763 \\
 & 20 & 0 & N/A & N/A & 42.720 & 335.224 & 43.642 & 345.100 & 44.765 & 349.531 & \textbf{46.957} & 353.161 \\
 & 20 & 10 & N/A & N/A & 42.903 & 335.647 & 43.484 & 343.633 & 44.018 & 346.811 & \textbf{46.850} & 352.747 \\
 & 20 & 20 & N/A & N/A & 43.414 & 337.874 & 43.153 & 342.360 & 44.263 & 349.385 & \textbf{46.696} & 352.454 \\
\bottomrule
\end{tabular*}
\end{threeparttable}
\end{table*}

\vfill

\end{document}